\DeclareMathOperator*{\argmin}{arg\,min}
\newtheorem{definition}{Definition}[section]
\newtheorem{theorem}[definition]{Theorem}
\newtheorem{lemma}[definition]{Lemma}
\DeclareMathOperator{\up}{up}
\DeclareMathOperator{\ri}{right}
\DeclareMathOperator{\down}{down}
\DeclareMathOperator{\lef}{left}
\title{Constructing Optimal $L_{\infty}$ Star Discrepancy Sets}
\author[1]{Fran\c{c}ois Cl\'ement \thanks{francois.clement@lip6.fr}}
\author[1]{Carola Doerr \thanks{carola.doerr@lip6.fr}}
\author[2]{Kathrin Klamroth \thanks{klamroth@math.uni-wuppertal.de}}
  \author[3]{Lu\'is Paquete \thanks{paquete@dei.uc.pt}}
  \affil[1]{Sorbonne Universit\'e, CNRS, LIP6, Paris, France}
  \affil[2]{University of Wuppertal, School of Mathematics and Natural Sciences, IMACM, Gaußstr. 20, 42119 Wuppertal, Germany}
  \affil[3]{CISUC, Department of Informatics Engineering, University of Coimbra, Portugal}
\begin{document}
\maketitle

\begin{abstract}

The $L_{\infty}$ star discrepancy is a very well-studied measure used to quantify the uniformity of a point set distribution. Constructing optimal point sets for this measure is seen as a very hard problem in the discrepancy community. Indeed, provably optimal point sets are, up to now, known only for $n\leq 6$ in dimension 2 and $n \leq 2$ for higher dimensions.

We introduce in this paper mathematical programming formulations to construct point sets with as low $L_{\infty}$ star discrepancy as possible. Firstly, we present two models to construct optimal sets and show that there always exist optimal sets in dimension 2 with the property that no two points share a coordinate. Then, we provide possible extensions of our models to other measures, such as the extreme and periodic discrepancies. 
For the $L_{\infty}$ star discrepancy, we are able to compute optimal point sets for up to 21 points in dimension 2 and for up to 8 points in dimension 3. For $d=2$ and $n\ge 7$ points, these point sets have around a 50\% lower discrepancy than the current best point sets. Furthermore, by plotting the local discrepancy values induced by the optimal sets, we observe a clear structural difference with that of known low-discrepancy sets.

\end{abstract}

\section{Introduction}

Discrepancy measures are designed to quantify how closely a finite set of points approximates the uniform distribution in a given space. Among the different discrepancy measures, one of the most important is the $L_{\infty}$ star discrepancy. The $L_{\infty}$ star discrepancy of a finite point set $P \subseteq [0,1)^d$ measures the worst absolute difference between the Lebesgue measure of a $d$-dimensional box $[0,q)$ anchored in $(0,\ldots,0)$ and the proportion $|P \cap [0,q)|/|P|$ of points that fall inside this box. Point sets and sequences of low $L_{\infty}$ star discrepancy have been used in many different applications, such as one-shot optimization~\cite{CauwetCDLRRTTU20}, design of experiments~\cite{SantnerDoE}, computer vision~\cite{MatBuilder}, financial mathematics~\cite{GalFin} and most importantly in Quasi-Monte Carlo integration~\cite{DickP10}. In particular, the Koksma-Hlawka inequality~\cite{Hlawka,Koksma} shows that the error made in the numerical integration of a function $f$ can be bounded by a product of the total variation of $f$ and the $L_{\infty}$ star discrepancy of the points in which $f$ is evaluated.

The design of low-discrepancy sets and sequences has been extensively studied over the last 50 years (see for example the books by Chazelle~\cite{Chaz}, Matousek~\cite{Mat}, or Niederreiter~\cite{Nie92}) and has led to constructions outperforming random points such as Halton and Sobol' sequences, digital nets~\cite{DickP10}, or lattices with carefully chosen parameters~\cite{JoeKuo2008}. For $n$ points in $[0,1)^d$, the best known sets have a discrepancy of order $O(\log^{d-1}(n)/n)$~\cite{Nie92}, against $O(\sqrt{d/n})$ for random points~\cite{Doerr14lowerBoundRandomPoints}. 

\textbf{Point sets with a fixed number of points:}  
While obtaining the best asymptotic order of the $L_{\infty}$ star discrepancy of sets and sequences has been the main focus of the research community, finding the exact point sets minimizing the $L_{\infty}$ star discrepancy for given values of $n$ has received comparatively little attention.
White~\cite{whit:onop:1976} obtained optimal sets for up to $n=6$ points in dimension 2. Pillard, Cools and Vandewoestyne~\cite{PTV} proved the optimal $n=1$ sets for all dimensions for the $L_{\infty}$ star discrepancy as well as the $L_2$ and extreme discrepancies. Larcher and Pillichshammer~\cite{Larcher} then extended these results to $n=2$ points. For the $L_2$ periodic discrepancy, Hinrichs and Oettershagen~\cite{Hinrichs2014} were able to use symmetry groups to reduce the problem space and obtained optimal sets for up to $n=16$ points in dimension 2.

Even if we remove the optimality condition, obtaining good tailored point sets for specific $n$  and $d$ combinations has received relatively little attention. The main efforts in this direction are by de Rainville and Doerr~\cite{Rainville} using evolutionary algorithms, Steinerberger~\cite{StefEnerg} with an energy minimization, and by Clément, Doerr, and Paquete~\cite{CDP22,CDP23} via subset selection from good pre-existing point sets.

\textbf{Our contribution:} We present in this paper two different non-linear programming formulations to obtain optimal point sets with regards to the $L_{\infty}$ star discrepancy. While these formulations can be easily generalized to higher dimensions, modern solvers are only able to solve for up to 21 points in $d=2$ and up to 8 points in $d=3$ within a reasonable time frame on one machine with 24 cores of the MeSu cluster at Sorbonne Université. 

The obtained point sets are far better than Fibonacci sets, one of the best two-dimensional sets, with a near 50\% improvement in all the cases in 2 dimensions. Table \ref{tab:values} and Figure~\ref{fig:plotopt} highlight the clear improvement in discrepancy values brought by our optimal sets. The point set structure is also very different, as shown by Figure \ref{fig:op_im}, suggesting a possible new point set construction approach. All of our code and some figures are available at \url{https://github.com/frclement/OptiSetsDiscrepancy}.

\begin{table}[b]
    \centering
    \begin{tabular}{|c|c|c|c|c|c|c|c|c|c|c|}
    \hline
        $n$ & 1 & 2 & 3 & 4 & 5 & 6 & 7 & 8 & 9 & 10 \\
        \hline
        Optimal & $\frac{1+\sqrt{5}}{2}$ & 0.366 & 0.2847 &0.2500 & 0.2000 & 0.1667 & 0.1500 & 0.1328 & 0.1235 & 0.1111\\
        \hline
        Fibonacci & 1.0 & 0.6909 & 0.5880 & 0.4910 & 0.3528 & 0.3183 & 0.2728 & 0.2553 & 0.2270 & 0.2042 \\
        \hline\hline
        $n$ & 11 & 12 & 13 & 14 & 15 & 16 & 17 & 18 & 19 & 20 \\
        \hline
        Optimal & 0.1030 & 0.0952 & 0.0889 & 0.0837 & 0.0782 & 0.0739 & 0.06996 & 0.06667 & 0.0634 & 0.0604\\
        \hline
        Fibonacci & 0.1857 & 0.1702 & 0.1571 & 0.1459 & 0.1390 & 0.1486 & 0.1398 & 0.1320 & 0.1251 & 0.1188 \\
        \hline
    \end{tabular}
    \caption{Comparison of previously best values for low-discrepancy sets and our optimal sets}
    \label{tab:values}
\end{table}

\begin{figure}[t]
    \centering
    \includegraphics[width=0.8\textwidth]{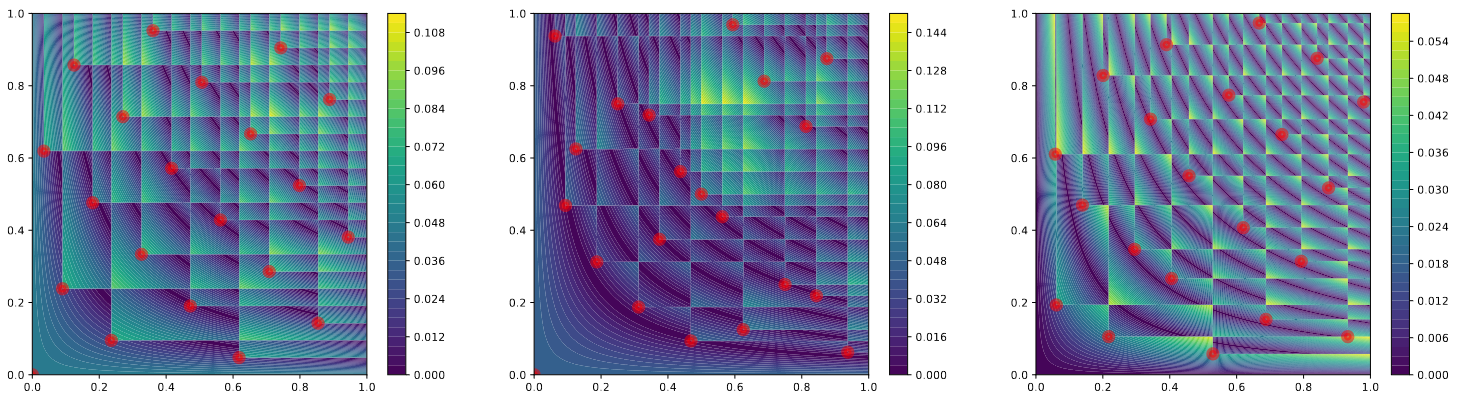}
    \caption{Fibonacci, Sobol' and optimal sets' local discrepancies for $n=21$.}
    \label{fig:op_im}
\end{figure}

\begin{figure}
    \centering
    \includegraphics[width=0.5\textwidth]{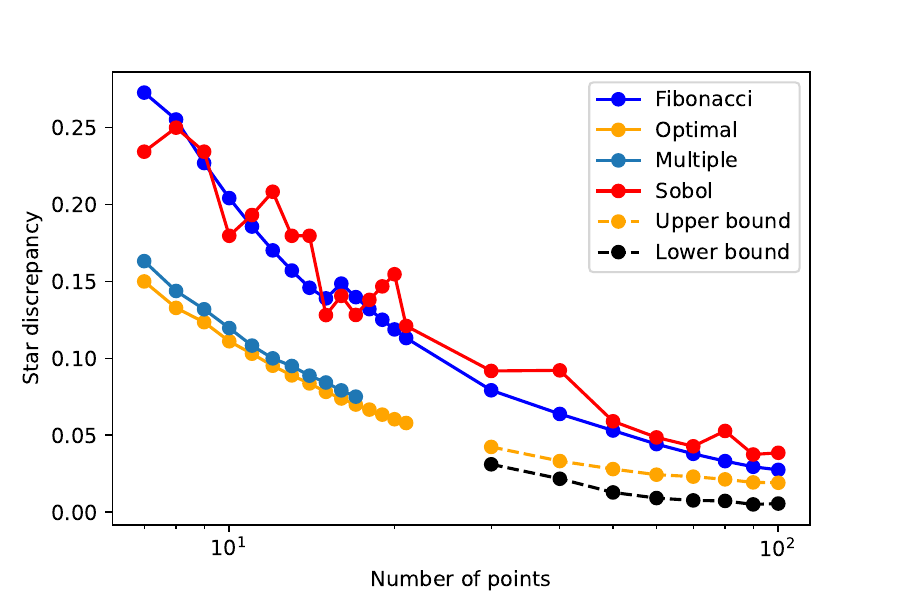}
    \caption{$L_{\infty}$ star discrepancies for the $L_{\infty}$ star optimal sets (line ``optimal'') and our multiple-corner optimal sets (line ``multiple''), compared to the Fibonacci set. The dashed lines are lower and upper bounds described in~\ref{tab:cont_heuristic}}.
    \label{fig:plotopt}
\end{figure}

Our framework also proves to be very flexible to tackle a number of related problems in the field of discrepancy, including the construction of point sets with optimal extreme discrepancy or periodic discrepancy. Details will be presented in Section \ref{sec:ext} and should not be seen as an exhaustive list of all the possibilities.

\textbf{Structure of the paper:} We recall in Section~\ref{sec:Gene} some background and important results on the $L_{\infty}$ star discrepancy. Section~\ref{sec:Formu} introduces our formulations as well as some lemmas on the structure of an optimal set. Some possible extensions of these formulations are presented in Section~\ref{sec:ext}, in particular the extension to higher dimensions. The optimal sets and their associated discrepancy values for the different settings are presented in the relevant sections.

\section{The \texorpdfstring{$L_{\infty}$}{L-infinity} star discrepancy}\label{sec:Gene}

\subsection{General results on the \texorpdfstring{$L_{\infty}$}{L-infinity} star discrepancy}

The $L_{\infty}$ star discrepancy of a point set represents the worst absolute difference between the volume of a box of the form $[0,q)$ and the proportion of points that fall inside this box. Formally, for a point set $P$ of $n$ points in $[0,1)^d$, the $L_{\infty}$ star discrepancy of $P$, $d^*_{\infty}(P)$ is defined by
$$
d^*_{\infty}(P) := \sup_{q\in[0,1)^d} \left| \frac{D(q,P)}{|P|} - \lambda(q) \right|,
$$
where $D(q,P)$ denotes the number of points of $P$ in the box $[0,q)$ and $\lambda(q)$ the Lebesgue measure of the $d$-dimensional box $[0,q)$. While the formal definition is with $P \subset [0,1)^d$, it is common to change to $[0,1]^d$ if more convenient and depending on the context \cite[Section 15.1]{OwenBook}. We will be using $[0,1]^d$ as it makes it easier to define the variables for the solver. The lower bound we prove in Section~\ref{sec:mingap} also shows that our sets will be in $[1/n,1]$, hence avoiding the main issue that may arise with periodicity and $0\equiv 1 \mod 1$.

As shown by Niederreiter in~\cite{NieBox}, computing the $L_{\infty}$ star discrepancy is a discrete problem, only points on a specific position grid can reach the maximal value. Figure \ref{fig:op_im} clearly shows these grids simply by doing local evaluations of the discrepancy values. First, while the definition only includes open boxes $[0,q)$, closed boxes [0,q] can be considered too.
Indeed, any closed anchored box in $[0,1]^d$ can be obtained as the limit of a sequence of bigger open boxes that contain the same number of points. The only exception is $[1,\ldots,1]$ and this closed box cannot give the worst discrepancy value as its local discrepancy is 0. More formally, for $P=\{x^{(1)},\dots,x^{(n)}\}$ the desired grid is given by

\begin{equation}
	\Gamma(P) := \Gamma_1(P) \times \ldots \times \Gamma_d(P)
\qquad \text{ and } 
\qquad 
	\overline{\Gamma}(P) := \overline{\Gamma}_1(P) \times \ldots \times \overline{\Gamma}_d(P),
\end{equation}
with
\begin{equation}
	\Gamma_j(P) := \{x_j^{(i)} | i \in 1,\ldots,n\} 
\qquad \text{ and } 
\qquad 
	\overline{\Gamma}_j(P) := \Gamma_j(P) \cup \{1\}.
\end{equation}

The $L_{\infty}$ star discrepancy computation thus reduces to the following discrete problem~\cite{DGWBook}

\begin{equation}
d_{\infty}^{*}(P) = \max \left\{\max_{q \in \overline{\Gamma}(P)}\frac{|P \cap [0,q]|}{|P|}-\lambda([0,q]),
\max_{q \in \Gamma(P)} \lambda([0,q))-\frac{|P \cap [0,q)|}{|P|}\right\}.
\label{discrepancy_formula}
\end{equation}

This formulation can be further simplified, as not all the points mentioned above can reach the maximum. For a box $[0,q)$ (or $[0,q]$) to be \emph{critical}, it is necessary for there to be, for every $j \in \{1,\ldots,d\}$, a point $x^{(i)} \in P$ such that $q_j=x_j^{(i)}$ and $x^{(i)} \leq q$ coordinate-wise. In other terms, each of the outer edges of the critical box must have at least one point on it. For any box $[0,q)$ or $[0,q]$ that doesn't fulfill this condition, it is possible to increase (for an open box) or decrease (for a closed box) one of the coordinates of $q$ without changing the number of points in the box and obtain a worse local discrepancy value. This distinction will be important for our models, as only these critical boxes will need to be considered.

\subsection{A generalization of a result  in~\texorpdfstring{\cite{whit:onop:1976}}{White} }\label{sec:White}

As mentioned in the introduction, White~\cite{whit:onop:1976} gave optimal discrepancy values for $n \leq 6$ in dimension 2. This work also includes a lower bound on the optimal $L_{\infty}$ star discrepancy values in dimension 2 for $n\geq 6$ (see Proposition~1 in \cite{whit:onop:1976}).

Given the relevance of Proposition 1 from~\cite{whit:onop:1976} for our models, we provide below a second proof of Proposition 1, that we furthermore generalize to any dimension. This result will be used to provide a lower bound constraint to the solver.

\begin{theorem}\cite[Proposition 1]{whit:onop:1976} \label{th:white}
Let $P \subset [0,1)^d$ with $|P|=n$. If $d=2$ and $n \geq 4$, or $d \geq 3$ and $n \geq 3$, then $d^*_{\infty}(P) \geq 1/n$.
\end{theorem}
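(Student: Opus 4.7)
The plan is to argue by contradiction. Suppose $d^*_\infty(P) < 1/n$. I first reduce to the case where $P$ forms a strict coordinate-wise chain, and then derive a pair of incompatible bounds on the second-smallest chain element.

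For the reduction, I consider the coordinate-wise upper corner $r = (\max_{x \in P} x_1, \ldots, \max_{x \in P} x_d)$ and let $m$ denote the number of points of $P$ that attain the maximum in at least one coordinate. The closed box $[0,r]$ contains all $n$ points, so its discrepancy $1 - \prod_j r_j$ being smaller than $1/n$ gives $\prod_j r_j > (n-1)/n$. The open box $[0,r)$ contains exactly $n-m$ points, so its discrepancy $\prod_j r_j - (n-m)/n < 1/n$ gives $\prod_j r_j < (n-m+1)/n$. These two bounds are compatible only if $m \leq 1$, forcing a single dominator that strictly majorises every other point in every coordinate (any coordinate-tie between distinct points would raise $m$ to at least $2$ and already yield the contradiction). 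A similar analysis of the upper corner of $P$ minus the dominator, then of $P$ minus the top two dominators, and so on---always using boxes inside the original set $P$---either produces a contradiction at some step or leaves a strict chain $x^{(1)} < x^{(2)} < \cdots < x^{(n)}$ in $P$ (after relabelling).

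In the chain case I would next prove the one-dimensional bound $x^{(i)}_j \in ((i-1)/n,\ i/n)$ for every $i$ and $j$. The lower bound comes from the closed box whose $j$-th coordinate equals $x^{(i)}_j$ and whose other coordinates equal $1 \in \overline{\Gamma}_\ell(P)$, which contains exactly $x^{(1)}, \ldots, x^{(i)}$. The upper bound follows from the supremum of the open-box discrepancy as the non-$j$ coordinates of $q$ tend to $1^-$: the count is then locked at $i-1$, only $\lambda(q)$ varies, and $d^*_\infty(P) < 1/n$ gives $x^{(i)}_j - (i-1)/n < 1/n$. Applied to the second chain element, this yields $x^{(2)}_j < 2/n$ for every $j$, so $\prod_j x^{(2)}_j < (2/n)^d$. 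On the other hand, the full-dimensional closed box $[0, x^{(2)}]$ contains exactly $x^{(1)}$ and $x^{(2)}$, and $2/n - \prod_j x^{(2)}_j < 1/n$ forces $\prod_j x^{(2)}_j > 1/n$. Combining the two bounds gives $n^{d-1} < 2^d$.

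For $d = 2$ this rearranges to $n < 4$, contradicting $n \geq 4$; for $d \geq 3$ it gives $n < 2^{d/(d-1)} \leq 2^{3/2} < 3$, contradicting $n \geq 3$. The main obstacle will be making the upper projection bound $x^{(i)}_j < i/n$ fully rigorous, since the natural witness box has $q_\ell = 1$ for $\ell \neq j$ and this value lies outside $\Gamma_\ell(P)$; the cleanest workaround is to appeal directly to the supremum in the definition of $d^*_\infty(P)$ and to observe that only $\lambda(q)$ (and not the point count) varies as each such $q_\ell \to 1^-$.
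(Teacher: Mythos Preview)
Your argument is correct. The chain case is handled exactly as in the paper: from $x^{(i)}_j<i/n$ you obtain $\prod_j x^{(2)}_j<(2/n)^d$, while the closed box $[0,x^{(2)}]$ forces $\prod_j x^{(2)}_j>1/n$, and the resulting inequality $n^{d-1}<2^d$ contradicts the hypotheses on $n$ and $d$.

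Where you differ from the paper is in the reduction to the chain case. The paper does not peel off dominators; instead, given two incomparable points $x,y$ (say $x_1>y_1$ and $x_2<y_2$), it builds a single corner $q$ with $q_1=x_1$, $q_2=y_2$, and $q_j=\max(x_j,y_j)$ otherwise, so that $[0,q]$ contains at least two more points of $P$ than $[0,q)$; then $\max\{|V_q-k/n|,|V_q-(k+2)/n|\}\ge 1/n$ immediately. Your approach replaces this single witness by an inductive argument on the upper corners $r^{(k)}$ of the successive residuals $P_k$: the pair of bounds $\prod_j r^{(k)}_j>(n-k-1)/n$ and $\prod_j r^{(k)}_j<(n-k-m_k+1)/n$ forces $m_k=1$ at every step, and since the previously removed dominators lie strictly above $r^{(k)}$ in every coordinate they never re-enter the count. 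This is longer but self-contained and avoids having to name the two incomparable points; the paper's version is shorter and, as noted after the proof, generalises at once to $d$ mutually non-dominated points giving $d^*_\infty(P)\ge d/(2n)$, a strengthening your peeling argument does not directly yield.

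Your closing remark about the limit $q_\ell\to 1^-$ is the right way to make the bound $x^{(i)}_j<i/n$ rigorous, and is exactly how the paper justifies using the grid $\overline{\Gamma}(P)$ for closed boxes while restricting to $\Gamma(P)$ for open ones.
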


\begin{proof}
   Let $n,d,P$ be as required.
There are two cases to consider: whether there exist $x$ and $y$ in $P$ such that neither $x \leq y$ nor $y \leq x$ coordinate-wise, or not. We will show that in both cases there exists a box with discrepancy at least $1/n$.
    \begin{itemize}
        \item We first suppose that the points in $P$ can be ordered such that $x^{(1)}\leq x^{(2)}\leq \ldots\leq x^{(n)}$. For all the large open boxes reaching an outer edge of $[0,1]^d$ to have discrepancy smaller than $1/n$, each coordinate $x_j^{(i)}$ has to be smaller than $i/n$. Given this, in the best possible case, the smallest closed box containing $x^{(i)}$ has volume smaller than $(i/n)^d$ and contains $i$ points. In dimension 2, for $i=2$ and $n \geq 4$, we then have a local discrepancy of the smallest closed box containing $x^{(2)}$ of more than $\frac{2}{n}-\frac{2^d}{n^d}$,  which is at least $1/n$. The same result holds in dimension at least 3 with the same value of $i$ and $n \geq 3$.
        \item If the points are not pairwise dominating 
        each other, there exist $x$ and $y$ in $P$ such that $x_1 > y_1$ and $x_2 < y_2$ (without loss of generality on the dimensions). The box $[0,q)$, where $q_1=x_1$ and $q_2=y_2$ and $q_j= \max(x_j,y_j)$ for $j \in \{3,\ldots,d\}$, contains at least two fewer points than $[0,q]$ and has the same Lebesgue measure $V_q$. Let $k$ be the number of points inside $[0,q)$. Then we have $d_{\infty}^{*}(P) \ge \max \{|V_q - k/n|, |V_q-(k+2)/n|\} \ge 1/n$. One of the two boxes therefore has discrepancy at least $1/n$.
        
    \end{itemize}
    We have thus shown that in both cases, there exists a box with local discrepancy at least $1/n$, which concludes the proof.
\end{proof}
By a similar argument as in Case~2, one can show that any set in dimension $d$ with $d$ mutually non-dominated points has discrepancy at least $d/(2n)$. We conjecture this to be a lower bound for all sets, provided $n$ is large enough while still being far smaller than the exponential number of points based on \cite{BilykSmall}.

\section{Problem formulations in two dimensions}\label{sec:Formu}

We introduce in this section our models to construct optimal $L_{\infty}$ star discrepancy sets. Section \ref{sec:2dcl} presents the more intuitive model, where we are directly optimizing point positions. It is in general the better performing model in 2 dimensions. In this model, we initially make an assumption of a minimal gap on the coordinates of two different points. Section \ref{sec:mingap} proves that this hypothesis does not increase the optimal $L_{\infty}$ star discrepancy value by showing not only that there exist optimal 
point sets with distinct coordinates but also that it is possible to set the minimal coordinate to $1/n$ when $n \geq 4$. It also describes a general procedure to shift points without increasing the discrepancy value of the set. Section \ref{sec:2dgrid} presents a second model, based on the fact that any point set naturally defines a grid $\Gamma$ as introduced in the previous section. Optimizing the placement of this grid and adding constraints enforcing exactly one point per grid line/column is equivalent to optimizing the placement of the point set. Finally, we present our results for these two models in Section~\ref{sec:results} and provide a visual comparison of our optimal sets to known low-discrepancy sets in Section~\ref{sec:images}.

\subsection{A ``classical'' formulation}\label{sec:2dcl}
We consider the problem of locating a set $P$ of $n$ points $x^{(i)}:=(x_{2i-1},x_{2i})\in[0,1]^2$ for $i=1,\dots,n$, such that $$P= \argmin_{X=\{x^{(1)},\dots,x^{(n)}\}}
d_{\infty}^*(X).$$ 
As mentioned above, we make the assumption that no two points in $P$ have the same value in any coordinate. 
Without loss of generality we assume that $x_{2i-1}<x_{2i+1}$ for all $i=1,\dots,n-1$, i.e., the points are labeled in increasing order of their first components. The point set $P$ then induces a grid $\overline{\Gamma}(P)$ with grid points $g_{ij}=(x_{2i-1},x_{2j})$ for $i,j=1,\dots,n$. By \eqref{discrepancy_formula}, to compute $d^*_{\infty}(P)$ for a point set $P$, we have to consider all critical grid points $g_{ij}=(x_{2i-1},x_{2j})$ for which the defining points $x^{(i)}$ and $x^{(j)}$ are located on or just below $g_{i,j}$. The first case corresponds to closed boxes, constraints~\eqref{eq:upperboundcont_dummy}, where $i=j$ is possible. The second case corresponds to open boxes, constraints~\eqref{eq:lowerboundcont_dummy}, where the points can also be defined by the outer edges of the box. Furthermore, to avoid having to treat the boxes with a coordinate equal to 1 as separate cases, we can add two dummy points $x^{(0)}=(0,1)$ and $x^{(n+1)}=(1,0)$, from which we only need the coordinates $x_0$ and $x_{2n+1}$. These have fixed values equal to $1$. We do not include them in any of the sums counting the number of points inside a box as they do not represent a real point of the final set.

For each of these critical boxes, we need to determine which points of $P$ are inside. While it is easy to decide whether $x^{(i)}$ is below $x^{(j)}$ in the first component by simply comparing the respective indices (recall that by assumption,  $x^{(i)}_1=x_{2i-1}<x_{2j-1}=x^{(j)}_1 $ if and only if $i<j$), we need to define 
indicator variables $y_{ij}\in\{0,1\}$ for all $i,j\in\{1,\dots,n\}$ to indicate that
$x^{(i)}$ is below, or equal to, $x^{(j)}$ in the second component (note that equality is only possible if $i=j$). In other words, we want that $y_{ij}=1$ if and only if $x^{(i)}_2=x_{2i} \leq x_{2j}=x^{(j)}_2$. Then, 
a grid point $g_{ij}$ needs to be considered whenever $j \leq 
i$ and $y_{ij}=1$ (closed boxes), and whenever $j<i$ and $y_{ij}=1$ (open boxes).

In order to properly define the indicator variables $y_{ij}$ for all $i,j\in\{1,\dots,n\}$, we first consider the case that $i<j$. 
Towards this end, observe that $x_{2i}> x_{2j}$ and hence $y_{ij}=0$ if and only if $x_{2j}-x_{2i}<0$. This can be translated into linear constraints on the indicator variable $y_{ij}$ as follows: 
\begin{equation*}
    \begin{array}{rclcrcl}
    \left(x_{2j}-x_{2i} < 0 \right. & \Rightarrow & \left. y_{ij}=0\right) &\quad \text{is enforced by}\quad & x_{2j}-x_{2i} &>& y_{ij}-1\\
    \left(x_{2j}-x_{2i} < 0 \right. & \Leftarrow & \left. y_{ij}=0\right) &\quad \text{is enforced by}\quad & x_{2j}-x_{2i} &<& y_{ij}.
    \end{array}
\end{equation*}
By anti-symmetry, we have $y_{ij}=1 - y_{ji}$ for all $i,j=1,\dots,n$, $i\neq j$. Moreover, we set $y_{ii}=1$ for all $i=1,\dots,n$. These constraints correspond to constraints \eqref{eq:yone} to \eqref{eq:yfour} in the model below. Note that in the formulation below, we additionally require a small distance $\varepsilon>0$ between $x$-coordinate (constraints \eqref{eq:mindist}) and $y$-coordinate values (constraints \eqref{eq:yone}) of distinct points in order to avoid degenerate situations with different points with one or multiple equal coordinates. We show in Section~\ref{sec:mingap} that there is at least one optimal solution verifying this for $\varepsilon$ small enough. A solution to the model with this small enough $\varepsilon$ is then a provably optimal solution of our initial problem.

We hence obtain the following nonlinear programming problem (that we refer to as model \eqref{eq:2dcont_dummy} in the following) that has quadratic terms in the constraints due to the volume computations:

\setcounter{equation}{4}
\begin{subequations}\label{eq:2dcont_dummy}
\begin{align}
    \min\;\; & f && \nonumber\\ 
    \text{s.t.}\;\; & \displaystyle \frac{1}{n} \sum_{u=1}^i y_{uj} - x_{2i-1}x_{2j} \leq f \!+\! (1\!-\!y_{ij}) && \forall i,j=1,\dots,n,\, j\leq i \label{eq:upperboundcont_dummy}\\
    & \displaystyle \frac{-1}{n}\left(\sum_{u=0}^{i-1} y_{uj}-\!1\right) + x_{2i-1}x_{2j} \leq f \!+\! (1\!-\!y_{ij})&& \forall i=1,\dots,n\!+\!1,\, j<i
    \label{eq:lowerboundcont_dummy}\\
    \setcounter{equation}{2}
    & x_{2i+1}-x_{2i-1} \geq \varepsilon && \forall i=1,\dots,n-1 \label{eq:mindist}\\
    & x_{2j}-x_{2i} \geq y_{ij}-1 + \varepsilon && \forall i,j=1,\dots,n,\, i<j \label{eq:yone}\\
    & x_{2j}-x_{2i} \leq y_{ij} && \forall i,j=1,\dots,n,\, i<j \label{eq:ytwo}\\
    & y_{ij}=1-y_{ji} && \forall i,j=1,\dots,n,\, i>j \label{eq:ythree}\\
    & y_{ii}=1 && \forall i=1,\dots,n \label{eq:yfour}\\
    & x_0=x_{2n+1}=1 && \nonumber \\
    &y_{0j}=0 && \forall j=1,\dots,n\nonumber \\
    &y_{j0}=y_{(n+1),j}=1 && \forall j=0,\dots,n \nonumber \\ 
    & x_{i}\in(0,1] && \forall i=1,\dots,2n\nonumber\\ 
    &y_{ij}\in\{0,1\}&& \forall i,j=1,\dots,n\nonumber \\ &f\geq 0 .\hspace*{-4cm} &&\nonumber
\end{align}
\end{subequations}
Note that the variables $y_{j,(n+1)}$ for $j=0,\dots,n+1$ are never used in the model and are hence not defined.

As mentioned previously, constraints \eqref{eq:upperboundcont_dummy} and \eqref{eq:lowerboundcont_dummy} only need to be enforced for critical grid points. We thus enforce them only for $j\leq i$ for closed boxes, accounting for the case that a point defines a critical grid point by itself, and for $j<i$ in the case of open boxes. Moreover, the constraints are relaxed (by adding $1$ on the right-hand side) whenever $x^{(j)}$ is not above (or equal to) $x^{(i)}$ (and hence $x^{(i)}$ is not below $g_{ij}$ and the box is not critical), i.e., whenever $y_{ij}=0$. For constraints \eqref{eq:lowerboundcont_dummy}, we need to count all points in the respective volume that are strictly below the considered grid point $g_{ij}$. Since we assume that no two points share the same coordinate value in any dimension, the summation is over the indices $1$ to $i-1$, where we have to correct for the case that $u=j$ for which we consider the point $x^{(j)}$ with $y_{jj}=1$ by subtracting $1$ from the sum over the $y_{uj}$'s. Boxes with a coordinate equal to $1$ correspond to the special cases where $i=n+1$ or $j=0$ in constraints \eqref{eq:lowerboundcont_dummy}.

Constraints \eqref{eq:mindist} (for the first dimension) and \eqref{eq:yone} and \eqref{eq:ytwo} (for the second dimension) impose a minimum coordinate difference $\varepsilon$. Since the points are ordered in the first dimension, only consecutive pairs need to be checked there. Section~\ref{sec:mingap} justifies this minimal coordinate difference.

In order to strengthen model \eqref{eq:2dcont_dummy}, additional constraints may be added.
\setcounter{equation}{4}
\begin{subequations}\label{eq:2dcont_extra}
\setcounter{equation}{7}
\begin{align}
    & f \geq 1/n && \text{valid if $n \geq 4$} 
    \label{eq:2dcont_boundz}\\ 
    & y_{ij}+y_{jk} -1 \leq y_{ik} && \forall i,j,k=1,\dots,n \label{eq:2dcont_triangle}\\
    & y_{ij}+y_{jk} \geq y_{ik} && \forall i,j,k=1,\dots,n \label{eq:2dcont_triangle2}\\
    & \sum_{i=1}^n \sum_{j=1}^n y_{ij} = \frac{n(n+1)}{2} && \label{eq:2dcont_sum}
\end{align}
\end{subequations}

This includes, among others, known bounds on the optimal value of $f$ as well as constraints that are based on the specific properties of the sorting variables $y_{ij}$ as, e.g., transitivity.
\cite{whit:onop:1976} derived a lower bound $d^*_{\infty} \geq 1/n$ value for optimal points sets with $n> 6$ in dimension 2, and that we generalize to $n>4$ in any dimension in Theorem~\ref{th:white}. This knowledge can be included in the model by adding constraint \eqref{eq:2dcont_boundz}. Constraints \eqref{eq:2dcont_triangle} enforce transitivity for the indicator variables, i.e., if $x_2^{(i)}=x_{2i}<x_{2j}=x_2^{(j)}$ and $x^{(j)}_2=x_{2j}< x_{2k}=x^{(k)}_2$
(i.e., if $y_{ij}=1$ and $y_{jk}=1$), then also $x_2^{(i)}=x_{2i}<x_{2k}=x_2^{(k)}$ (i.e., $y_{ik}=1$). Note that constraints \eqref{eq:2dcont_triangle} are also valid for cases where some or all of the indices $i,j,k$ are equal. Constraints \eqref{eq:2dcont_triangle2} cover the converse case where 
$x_2^{(i)}>x_2^{(j)}$ and $x^{(j)}_2>x^{(k)}_2$
(i.e., if $y_{ij}=0$ and $y_{jk}=0$) and hence  $x_2^{(i)}>x_2^{(k)}$ (i.e., $y_{ik}=0$). 
Finally, constraint \eqref{eq:2dcont_sum} counts the number of $y_{ij}$ variables that have the value $1$ in any feasible solution, which is always constant.

\subsection{Minimal point spacing}\label{sec:mingap}

In order to refine the model further and justify our general position hypothesis, we prove in this section some properties on  optimal point sets. In particular we show that there exists some optimal sets in dimension 2 such that
\begin{itemize}
    \item No two points share the same coordinate value in any of the coordinates (we refer to this as the \emph{general position assumption}).
    \item Either the points all have a minimum distance from the lower and left boundaries (up and right shifts, Lemma~\ref{lemma:upshift_lb}), or the respectively first points are on the lower and left boundaries of the unit cube (down and left shifts, Lemma~\ref{lemma:downshift_ub}). This results holds in any dimension.
    \item The points satisfy some minimum distance requirements, i.e., they can be moved such that the distance between two vertically consecutive points / horizontally consecutive points is at least $\varepsilon$ with a sufficiently small $\varepsilon>0$. 
\end{itemize}

\begin{definition}[up-shift, right-shift]
    Consider an $n$-point set $P=\{x^{(1)},\dots,x^{(n)}\}$ in $[0,1]^2$ and let $i\in\{1,\dots,n\}$.
    \begin{enumerate}
        \item The movement of a point $x^{(i)}\in P$ is called an \emph{up-shift}, if the point $x^{(i)}$ is moved vertically upwards. An up-shift leads to a new $n$-point set $\up(P,x^{(i)},\delta)=P\setminus \{x^{(i)}\} \cup \{(x^{(i)}_1,x^{(i)}_2+\delta)\}$, with $0< \delta\leq 1-x^{(i)}_2$. An up-shift is called \emph{admissible}, if 
        $$\delta\leq\max\left\{0, \, \frac{1}{n}-\min_{k\neq i: x^{(k)}\leq x^{(i)}}
        x^{(i)}_2-x^{(k)}_2\right\}.$$ 
        \item Analogously, the movement of a point $x^{(i)}\in P$ is called a \emph{right-shift}, if the point $x^{(i)}$ is moved horizontally to the right. A right-shift leads to a new $n$-point set $\ri(P,x^{(i)},\delta)=P\setminus \{x^{(i)}\} \cup \{(x^{(i)}_1+\delta,x^{(i)}_2)\}$, with $0<\delta\leq 1-x^{(i)}_1$. A right-shift is called \emph{admissible}, if 
        $$\delta\leq\max\left\{0,\, \frac{1}{n}-\min_{k\neq i: x^{(k)}\leq x^{(i)}}
        x^{(i)}_1-x^{(k)}_1\right\}.$$ 
    \end{enumerate}
\end{definition}

\begin{lemma}\label{lemma:upshift_ub}
Let $P^*$ be an optimal $n$-point set with $d^*_{\infty}(P^*)=f^*$. Then, an up-shift never violates a closed box constraint \eqref{eq:upperboundcont_dummy} (that defines an upper bound on the number of points in a closed box) w.r.t.\ $f^*$. Analogously, a right-shift never violates a closed box constraint \eqref{eq:upperboundcont_dummy} w.r.t.\ $f^*$.
\end{lemma}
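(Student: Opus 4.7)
The plan is to exploit a simple monotonicity property: for any fixed closed anchored box $[0,q]\subseteq[0,1]^2$, an up-shift of a single point $x^{(i)}$ can only \emph{remove} that point from $[0,q]$; it can never bring a new point in, and no other point moves. Hence the point count in every closed anchored box is non-increasing, while the volume $\lambda([0,q])$ is unchanged.

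Concretely, I would let $P'=\up(P^*,x^{(i)},\delta)$ and fix an arbitrary closed anchored box $[0,q]$ with $q\in[0,1]^2$. Only $x^{(i)}$ has moved, from $(x^{(i)}_1,x^{(i)}_2)$ to $(x^{(i)}_1,x^{(i)}_2+\delta)$. Membership in $[0,q]$ requires both coordinates to be at most the corresponding coordinate of $q$; since the first coordinate of $x^{(i)}$ is unchanged and the second coordinate only increases, the new position lies in $[0,q]$ only if the old one did. Therefore $|P'\cap[0,q]|\le |P^*\cap[0,q]|$, so that
\[
\frac{|P'\cap[0,q]|}{n}-\lambda([0,q])\;\le\;\frac{|P^*\cap[0,q]|}{n}-\lambda([0,q])\;\le\;d^*_\infty(P^*)\;=\;f^*,
\]
where the last inequality uses the definition of $d^*_\infty$ as a supremum over all anchored boxes (closed boxes are admissible by the discussion following \eqref{discrepancy_formula}). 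Since $q$ was arbitrary, this covers in particular every critical closed box of $P'$, which is exactly what constraints \eqref{eq:upperboundcont_dummy} enforce.

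The right-shift statement follows by an identical argument with the roles of the two coordinates exchanged. I do not foresee a real obstacle here; the only mildly delicate point is to emphasize that even though the grid $\overline{\Gamma}(P')$ (and hence the index set of critical boxes) changes under the shift, the bound $f^*$ is a supremum over \emph{all} $q$, so the monotonicity argument is insensitive to which boxes are critical.
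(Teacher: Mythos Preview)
Your argument is correct. The core idea---that moving a point up can only remove it from a closed anchored box, never add it, so the closed-box count is monotone non-increasing---is the same as the paper's. Where you differ is in presentation: the paper argues grid-point by grid-point, distinguishing between grid points $\hat g_{ab}$ with $b\neq i$ (unchanged position, possibly fewer points) and the moved grid points $\hat g_{ai}$ (larger volume, same count, with a separate case when a horizontal grid line is crossed). You instead bound the closed-box excess uniformly over \emph{all} $q\in[0,1]^2$ using the definition of $d^*_\infty$ as a supremum, which makes the grid bookkeeping unnecessary and collapses all cases into one line. Your approach is cleaner and more robust; the paper's grid-based argument is closer to the constraint language of model~\eqref{eq:2dcont_dummy} but pays for that with extra case analysis. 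The remark you flag about the changing critical-box index set is exactly the right thing to say to make the argument airtight.
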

\begin{proof}
    Consider an up-shift from $P^*$ to a new $n$-point set $\hat{P}:=\up(P^*,x^{(i)},\delta)$ with $i\in\{1,\dots,n\}$ arbitrary but fixed. We refer to the grid points induced by $P^*$ as $\Gamma(P^*):=\{g_{ab}:\,a,b=1,\dots,n+1\}$ and to the grid points induced by $\hat{P}$ by $\Gamma(\hat{P}):=\{\hat{g}_{ab}:\,a,b=1,\dots,n+1\}$, where the index $n+1$ refers to the respective grid points on the upper and right boundary of $[0,1]^2$, respectively.

    First note that $\hat{g}_{ab}=g_{ab}$ for all $a,b=1,\dots,n+1$, $b\neq i$, and either the number of points in the closed box $[0,\hat{g}_{ab}]$ stayed the same or decreased by one. Hence, the left-hand term of the closed box constraint \eqref{eq:upperboundcont_dummy} is the same or decreased, the constraint remains satisfied. See Figure~\ref{fig:shifting} for an illustration, where $i=8$ and the grid points 
    $\hat{g}_{a8}$ are highlighted by empty orange circles. Moreover, all grid points $\hat{g}_{ai}$ cover a larger area after the shifting than the corresponding grid points $g_{ai}$, for $i=1,\dots,n+1$, while the number of covered points remains unchanged as long as no other horizontal grid line is crossed. If a horizontal grid line is crossed, then also a grid point $g_{aj}$ with $j\neq i$ is crossed at which the closed box constraint \eqref{eq:upperboundcont_dummy} was satisfied even when counting the point $x^{(i)}$ in $P^*$, and again we have a larger area covered by $\hat{g}_{ai}$ in $\hat{P}$. Hence, the closed box constraints \eqref{eq:upperboundcont_dummy} remain satisfied at these points as in the regular $\hat{g}_{ab}$ case. 
    
    The argumentation for right-shifts is completely analogous since both dimensions have a symmetric role.
\end{proof}

Note that the proof of Lemma~\ref{lemma:upshift_ub} does not rely on the admissibility of the up-shift or right-shift, respectively. Given that upper bound constraints \eqref{eq:upperboundcont_dummy} represent closed boxes for which we prefer fewer points, reducing the number of points in a fixed volume naturally doesn't increase the $L_{\infty}$ star discrepancy for the closed box.

\begin{lemma}\label{lemma:upshift_lb}
Let $P^*$ be an optimal $n$-point set with $d^*_{\infty}(P^*)=f^*$. Then, an \textbf{admissible} up-shift never violates an open box constraint \eqref{eq:lowerboundcont_dummy} (that defines a lower bound on the number of points in an open box) w.r.t.\ $f^*$. Analogously, an admissible right-shift never violates an open box constraint \eqref{eq:lowerboundcont_dummy} w.r.t.\ $f^*$.
\end{lemma}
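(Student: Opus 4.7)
The plan is to bound the local open-box discrepancy of any $q\in\hat\Gamma(\hat{P})$ in $\hat{P}:=\up(P^*,x^{(i)},\delta)$ by $f^*$, by matching each such $q$ to a grid-point constraint in $\Gamma(P^*)$ that is already known to hold by optimality. Two preliminaries organise the bookkeeping. Assuming $\delta>0$ (the case $\delta=0$ is trivial), admissibility yields a witness point $x^{(k^*)}\in P^*\setminus\{x^{(i)}\}$ with $x^{(k^*)}\le x^{(i)}$ coordinate-wise and $x^{(i)}_2+\delta\le x^{(k^*)}_2+1/n$. Moreover, since the up-shift preserves all first coordinates, $\hat\Gamma_1(\hat{P})=\Gamma_1(P^*)$, and $\hat\Gamma_2(\hat{P})$ differs from $\Gamma_2(P^*)$ only by exchanging $x^{(i)}_2$ for $x^{(i)}_2+\delta$; in particular, $x^{(i)}$ is the only point whose membership in a box $[0,q)$ can possibly change, and only by leaving.

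I would split on whether $x^{(i)}$ is actually lost from $[0,q)$. If not, the count is unchanged between $P^*$ and $\hat{P}$, and the claim reduces to an open-box constraint of $P^*$: either $q\in\Gamma(P^*)$ directly, or $q_2=x^{(i)}_2+\delta$, in which case I enlarge $q_2$ to the next element of $\Gamma_2(P^*)\cup\{1\}$ above it (the dummy point $x^{(0)}=(0,1)$ handles the top boundary), picking up no additional point of $P^*$ while only increasing the volume. The substantive case is when $x^{(i)}$ is lost, so $x^{(i)}_1<q_1$, $q_2\in(x^{(i)}_2,x^{(i)}_2+\delta]$, and $|\hat{P}\cap[0,q)|=k-1$ for $k:=|P^*\cap[0,q)|$. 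Here I take as reference the open box at $q''=(q_1,x^{(k^*)}_2)\in\Gamma(P^*)$, with count $N:=|P^*\cap[0,q'')|$. The identity
\[\lambda(q)-(k-1)/n = \bigl(\lambda(q)-q_1 x^{(k^*)}_2\bigr) + \bigl(q_1 x^{(k^*)}_2 - N/n\bigr) - (k-1-N)/n\]
then does the job: the middle bracket is at most $f^*$ by the open-box constraint at $q''$; the first bracket equals $q_1(q_2-x^{(k^*)}_2)\le 1/n$ by admissibility; and the last term is at most $-1/n$ because both $x^{(i)}$ and $x^{(k^*)}$ belong to the strip $\{j:x^{(j)}_1<q_1,\,x^{(k^*)}_2\le x^{(j)}_2<q_2\}$, forcing $k-N\ge 2$.

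The right-shift case is completely symmetric after swapping the roles of the two coordinates. The main obstacle is choosing the right reference box in the substantive case: the obvious candidate $(q_1,x^{(i)}_2)$ yields the same volume bound but only one surplus point ($x^{(i)}$) in the corresponding strip, which leaves the inequality short by exactly $1/n$. The point $x^{(k^*)}$ guaranteed by admissibility plays a double role, simultaneously bounding $q_2-x^{(k^*)}_2\le 1/n$ and contributing the second surplus point needed to absorb the $1/n$ jump in local discrepancy caused by removing $x^{(i)}$ from the count.
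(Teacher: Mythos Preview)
Your argument is correct and rests on the same idea as the paper's proof: compare the problematic box to a reference open box at height $x^{(k^*)}_2$, where $x^{(k^*)}$ is the admissibility witness, and use that the extra volume is at most $1/n$ while at least one net point is gained. The paper packages this geometrically via the containment $B_2\subseteq B_{1,k^*}\subseteq B_1$ and compares within $\hat P$ (so only $x^{(k^*)}$ is counted as the extra point), whereas you compare within $P^*$ and use the explicit three-term identity, which forces you to count both $x^{(i)}$ and $x^{(k^*)}$ to offset the lost point; the bookkeeping differs but the mechanism is identical.
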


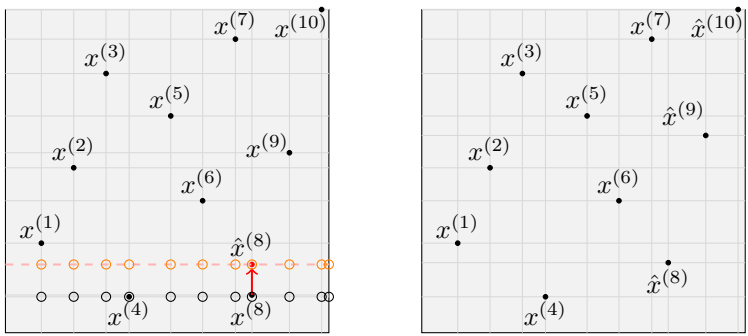
\begin{figure}[t]
\centering
\begin{tikzpicture}[scale=0.43]
\fill[gray!10] (0,0)rectangle(10,10);
\draw (0,0)rectangle(10,10);
\draw[gray!30] (0,2.7702) -- (10,2.7702);
\draw[gray!30] (0,5.1030) -- (10,5.1030);
\draw[gray!30] (0,8.0191) -- (10,8.0191);
\draw[gray!30] (0,1.1111) -- (10,1.1111);
\draw[gray!30] (0,6.7068) -- (10,6.7068);
\draw[gray!30] (0,4.0824) -- (10,4.0824);
\draw[gray!30] (0,9.0829) -- (10,9.0829);
\draw[gray!30] (0,1.1664) -- (10,1.1664);
\draw[gray!30] (0,5.5680) -- (10,5.5680);
\draw[gray!30] (0,10.0) -- (10,10.0);
\draw[gray!30] (1.1111,0) -- (1.1111,10);
\draw[gray!30] (2.1111,0) -- (2.1111,10);
\draw[gray!30] (3.1111,0) -- (3.1111,10);
\draw[gray!30] (3.8248,0) -- (3.8248,10);
\draw[gray!30] (5.1111,0) -- (5.1111,10);
\draw[gray!30] (6.0966,0) -- (6.0966,10);
\draw[gray!30] (7.1111,0) -- (7.1111,10);
\draw[gray!30] (7.6207,0) -- (7.6207,10);
\draw[gray!30] (8.7804,0) -- (8.7804,10);
\draw[gray!30] (9.7757,0) -- (9.7757,10);
\filldraw (1.1111,2.7702) circle (2pt);
\node at (1.1111,3.3) {$x^{(1)}$};
\filldraw (2.1111,5.1030) circle (2pt);
\node at (2.1111,5.7) {$x^{(2)}$};
\filldraw (3.1111,8.0191) circle (2pt);
\node at (3.1111,8.6) {$x^{(3)}$};
\filldraw (3.8248,1.1111) circle (2pt);
\node at (3.8248,0.6) {$x^{(4)}$};
\filldraw (5.1111,6.7068) circle (2pt);
\node at (5.1111,7.3) {$x^{(5)}$};
\filldraw (6.0966,4.0824) circle (2pt);
\node at (6.0966,4.7) {$x^{(6)}$};
\filldraw (7.1111,9.0829) circle (2pt);
\node at (7.1111,9.6) {$x^{(7)}$};
\filldraw (7.6207,1.1664) circle (2pt);
\node at (7.6207,0.6) {$x^{(8)}$};
\draw[color=red,line width=0.75pt,->] (7.6207,1.1664) -- (7.6207,2.0); 
\filldraw[color=red] (7.6207,2.1111) circle (2pt);
\node at (7.6207,2.75) {$\hat{x}^{(8)}$};
\draw[red!30, dashed, line width=0.75pt] (0,2.1111) -- (10,2.1111);
\filldraw (8.7804,5.5680) circle (2pt);
\node at (8.1,5.8) {$x^{(9)}$};
\filldraw (9.7757,10.0) circle (2pt);
\node at (9.15,9.5) {$x^{(10)}$};
\draw (1.1111,1.1111) circle (4pt);
\draw (2.1111,1.1111) circle (4pt);
\draw (3.1111,1.1111) circle (4pt);
\draw (3.8248,1.1111) circle (4pt);
\draw (5.1111,1.1111) circle (4pt);
\draw (6.0966,1.1111) circle (4pt);
\draw (7.1111,1.1111) circle (4pt);
\draw (7.6207,1.1111) circle (4pt);
\draw (8.7804,1.1111) circle (4pt);
\draw (9.7757,1.1111) circle (4pt);
\draw (10,1.1111) circle (4pt);
\draw[color=orange] (1.1111,2.1111) circle (4pt);
\draw[color=orange] (2.1111,2.1111) circle (4pt);
\draw[color=orange] (3.1111,2.1111) circle (4pt);
\draw[color=orange] (3.8248,2.1111) circle (4pt);
\draw[color=orange] (5.1111,2.1111) circle (4pt);
\draw[color=orange] (6.0966,2.1111) circle (4pt);
\draw[color=orange] (7.1111,2.1111) circle (4pt);
\draw[color=orange] (7.6207,2.1111) circle (4pt);
\draw[color=orange] (8.7804,2.1111) circle (4pt);
\draw[color=orange] (9.7757,2.1111) circle (4pt);
\draw[color=orange] (10,2.1111) circle (4pt);
\end{tikzpicture}\hspace{1cm}
\begin{tikzpicture}[scale=0.43]
\fill[gray!10] (0,0)rectangle(10,10);
\draw (0,0)rectangle(10,10);
\draw[gray!30] (0,2.7702) -- (10,2.7702);
\draw[gray!30] (0,5.1030) -- (10,5.1030);
\draw[gray!30] (0,8.0191) -- (10,8.0191);
\draw[gray!30] (0,1.1111) -- (10,1.1111);
\draw[gray!30] (0,6.7068) -- (10,6.7068);
\draw[gray!30] (0,4.0824) -- (10,4.0824);
\draw[gray!30] (0,9.0829) -- (10,9.0829);
\draw[gray!30] (0,2.1664) -- (10,2.1664);
\draw[gray!30] (0,6.1030) -- (10,6.1030);
\draw[gray!30] (0,10.0) -- (10,10.0);
\draw[gray!30] (1.1111,0) -- (1.1111,10);
\draw[gray!30] (2.1111,0) -- (2.1111,10);
\draw[gray!30] (3.1111,0) -- (3.1111,10);
\draw[gray!30] (3.8248,0) -- (3.8248,10);
\draw[gray!30] (5.1111,0) -- (5.1111,10);
\draw[gray!30] (6.0966,0) -- (6.0966,10);
\draw[gray!30] (7.1111,0) -- (7.1111,10);
\draw[gray!30] (7.6207,0) -- (7.6207,10);
\draw[gray!30] (8.7804,0) -- (8.7804,10);
\draw[gray!30] (9.7804,0) -- (9.7804,10);
\filldraw (1.1111,2.7702) circle (2pt);
\node at (1.1111,3.3) {$x^{(1)}$};
\filldraw (2.1111,5.1030) circle (2pt);
\node at (2.1111,5.7) {$x^{(2)}$};
\filldraw (3.1111,8.0191) circle (2pt);
\node at (3.1111,8.6) {$x^{(3)}$};
\filldraw (3.8248,1.1111) circle (2pt);
\node at (3.8248,0.6) {$x^{(4)}$};
\filldraw (5.1111,6.7068) circle (2pt);
\node at (5.1111,7.3) {$x^{(5)}$};
\filldraw (6.0966,4.0824) circle (2pt);
\node at (6.0966,4.7) {$x^{(6)}$};
\filldraw (7.1111,9.0829) circle (2pt);
\node at (7.1111,9.6) {$x^{(7)}$};
\filldraw (7.6207,2.1664) circle (2pt);
\node at (7.6207,1.6) {$\hat{x}^{(8)}$};
\filldraw (8.7804,6.1030) circle (2pt);
\node at (8.1,6.8) {$\hat{x}^{(9)}$};
\filldraw (9.7804,10.0) circle (2pt);
\node at (9.15,9.5) {$\hat{x}^{(10)}$};
\end{tikzpicture}

\caption{Left: An optimal $10$-point set $P^*$ with $f^*=d^*_{\infty}(P^*)=0.1111$. The red arrow indicates the up-shift $\up(P^*,x^{(8)},0.09447)$, which is admissible with $\delta=\frac{1}{n}-(x^{(8)}_2-x^{(4)}_2)$. Note that $x^{(8)}$ is slightly higher than $x^{(4)}$, i.e., the two points are not on the same horizontal grid line. Right: Alternative optimal $10$-point set after implementing all admissible up-shifts and all admissible right-shifts.\label{fig:shifting}}
\end{figure}

\begin{proof}First consider an admissible up-shift and note that if $\delta=0$ (and hence $x^{(i)}=\hat{x}^{(i)}$) or if $x^{(i)}_1=1$ (and hence $\hat{x}^{(i)}_1=1$), then the result is trivial.

Now let $\delta>0$ and $x^{(i)}_1=\hat{x}^{(i)}_1<1$.
We use the same notation as in the proof of Lemma~\ref{lemma:upshift_ub}  and consider grid points $\hat{g}_{ab}$ for all $a,b=1,\dots,n+1$. 
If neither the number of points in the open box $[0,\hat{g}_{ab})$ nor its volume is changed as compared to $[0,g_{ab})$, then as in the proof of Lemma~\ref{lemma:upshift_ub} the corresponding open box constraint \eqref{eq:lowerboundcont_dummy} remains satisfied. 
    
In particular, this applies to all grid points $\hat{g}_{ak}$, $a=1,\dots,n+1$, for which $x^{(k)}\leq x^{(i)}$, $k\neq i$ (since we consider an up-shift of $x^{(i)}$). Now observe that every open box $[0,\hat{g}_{ab})$ with $x^{(k)}<\hat{g}_{ab}$ contains at least one additional point as compared to $\hat{g}_{ak}$, namely the point $x^{(k)}$. This implies that, independent of the location of other points in $\hat{P}$, the open box constraint \eqref{eq:lowerboundcont_dummy} is satisfied at all such grid points $\hat{g}_{ab}$ with $x^{(k)}<\hat{g}_{ab}$ for which the additional volume, as compared to that covered by $\hat{g}_{ak}$, does not exceed the value of $\frac{1}{n}$. This clearly includes all grid points $\hat{g}_{ab}$ in the box $B_{1,k}:=(x^{(k)},(1,x^{(k)}_2+\frac{1}{n})]$ with lower left corner at $x^{(k)}$ and upper right corner at $(1,x^{(k)}_2+\frac{1}{n})$. Let $B_1:=\bigcup_{k\neq i: x^{(k)}\leq x^{(i)}} B_{1,k}$.
    
Moreover, the only grid points $\hat{g}_{ab}$ for which the number of points in the open box $[0,\hat{g}_{ab})$ may change are the grid points in the box $B_2:=(x^{(i)},(1,\hat{x}^{(i)}_2)]$ with lower left corner at $x^{(i)}$ and upper right corner at $(1,\hat{x}^{(i)}_2)$. Similarly, the only grid points $\hat{g}_{ab}$ for which the volume of the open box $[0,\hat{g}_{ab})$ may change are the grid points $\hat{g}_{ai}$, where the largest volume for the same number of points is attained for $a=n+1$, i.e., at the grid point $\hat{g}_{(n+1),i}=(1,\hat{x}^{(i)}_2)$ that is also  contained in the box $B_2$. Since we only consider admissible up-shifts, we have $B_2\subseteq B_1$ and the result follows.
 
\end{proof}

To rephrase Lemma~\ref{lemma:upshift_lb}, it is possible to shift a point $x\in P$ up or to the right if there is another point below and left, close enough to ensure that the open box constraint defined by $x$ is not active (i.e. the inequality is strict in the constraint), ``how close'' then defines by how much it is possible to shift $x$.

Similarly, we can define down-shifts and left-shifts in a very similar way, providing even more possibilities for modifying an existing point set.

\begin{definition}[down-shift, left-shift]
    Consider an $n$-point set $P=\{x^{(1)},\dots,x^{(n)}\}$ and let $i\in\{1,\dots,n\}$.
    \begin{enumerate}
        \item The movement of a point $x^{(i)}\in P$ is called a \emph{down-shift}, if the point $x^{(i)}$ is moved vertically downwards. A down-shift leads to a new $n$-point set $\down(P,x^{(i)},\delta)=P\setminus \{x^{(i)}\} \cup \{(x^{(i)}_1,x^{(i)}_2-\delta)\}$, with $0< \delta\leq x^{(i)}_2$. A down-shift is called \emph{admissible}, if 
        $$\delta\leq\max\left\{0,\, \frac{1}{n}-\min_{k\neq i: x^{(k)}\geq x^{(i)}}
        x^{(k)}_2-x^{(i)}_2\right\}.$$ 
        \item The movement of a point $x^{(i)}\in P$ is called a \emph{left-shift}, if the point $x^{(i)}$ is moved horizontally to the left. A left-shift leads to a new $n$-point set $\lef(P,x^{(i)},\delta)=P\setminus \{x^{(i)}\} \cup \{(x^{(i)}_1-\delta,x^{(i)}_2)\}$, with $0<\delta\leq x^{(i)}_1$. A left-shift is called \emph{admissible}, if 
        $$\delta\leq\max\left\{0,\, \frac{1}{n}-\min_{k\neq i: x^{(k)}\geq x^{(i)}}
        x^{(k)}_1-x^{(i)}_1\right\}.$$ 
    \end{enumerate}
\end{definition}

\begin{lemma}\label{lemma:downshift_lb}
Let $P^*$ be an optimal $n$-point set with $d^*_{\infty}(P^*)=f^*$. Then, a down-shift never violates an open box constraint \eqref{eq:lowerboundcont_dummy} w.r.t.\ $f^*$. Analogously, a left-shift never violates an open box constraint \eqref{eq:lowerboundcont_dummy} w.r.t.\ $f^*$.
\end{lemma}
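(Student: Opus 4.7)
The plan is to mirror the template of Lemma~\ref{lemma:upshift_ub}, exploiting the duality between closed/open boxes and up/down, right/left shifts. Intuitively, down- and left-shifts are the ``safe direction'' for the open box constraints \eqref{eq:lowerboundcont_dummy} in the same way that up- and right-shifts are safe for the closed box constraints \eqref{eq:upperboundcont_dummy}, so no admissibility hypothesis should be required. I would set $\hat{P} := \down(P^*, x^{(i)}, \delta)$ with induced grid $\Gamma(\hat{P}) = \{\hat{g}_{ab} : a,b = 1,\dots,n+1\}$, and check the open box constraint at every $\hat{g}_{ab}$ with right-hand side $f^*$.

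The key step I would isolate first is a monotonicity property of down-shifts: for every $q \in [0,1]^2$,
\[
|\hat{P}\cap[0,q)| \;\geq\; |P^*\cap[0,q)|.
\]
This holds since the shift preserves $x^{(i)}_1$ and only decreases the second coordinate, so $x^{(i)}\in[0,q)$ implies $\hat{x}^{(i)}\in[0,q)$; additionally, $\hat{x}^{(i)}$ may newly enter an open box precisely when $\hat{x}^{(i)}_2 < q_2 \leq x^{(i)}_2$ together with $x^{(i)}_1 < q_1$. All other points are common to $P^*$ and $\hat{P}$, so no point is lost from any anchored open box.

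With this in hand, I would fix an arbitrary $\hat{g}_{ab}\in\Gamma(\hat{P})$. Because $d^*_\infty(P^*)=f^*$ bounds the local discrepancy uniformly over \emph{all} anchored open boxes in $[0,1]^2$, and not merely over grid points of $\Gamma(P^*)$, one has $\lambda([0,\hat{g}_{ab})) - |P^*\cap[0,\hat{g}_{ab})|/n \leq f^*$ even though $\hat{g}_{ab}$ need not belong to $\Gamma(P^*)$. Combined with the monotonicity above,
\[
\lambda([0,\hat{g}_{ab})) - \frac{|\hat{P}\cap[0,\hat{g}_{ab})|}{n} \;\leq\; \lambda([0,\hat{g}_{ab})) - \frac{|P^*\cap[0,\hat{g}_{ab})|}{n} \;\leq\; f^*,
\]
which is exactly the open box constraint \eqref{eq:lowerboundcont_dummy} at $\hat{g}_{ab}$ with right-hand side $f^*$. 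The assertion for left-shifts follows by symmetrically exchanging the roles of the two coordinates.

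The main subtlety --- rather than an obstacle --- is that $\hat{g}_{ab}$ with $b=i$ is typically \emph{not} a grid point of $P^*$: the horizontal grid line at $x^{(i)}_2$ has migrated down to $\hat{x}^{(i)}_2$, so the box $[0,\hat{g}_{ai})$ is strictly smaller than $[0,g_{ai})$. This is resolved by appealing to the definition of $d^*_\infty$ as a supremum over all anchored boxes rather than to the finite reformulation \eqref{discrepancy_formula}. In contrast with Lemma~\ref{lemma:upshift_lb}, admissibility of the shift plays no role here, paralleling the asymmetry already visible between Lemmas~\ref{lemma:upshift_ub} and~\ref{lemma:upshift_lb}.
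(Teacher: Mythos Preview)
Your proof is correct. The paper merely states that the result is ``completely analogous to Lemma~\ref{lemma:upshift_ub}'', which would translate into a case analysis on the grid points $\hat g_{ab}$: for $b\neq i$ the grid point is unchanged and the open box can only gain the point $x^{(i)}$, while for $b=i$ the grid point has moved down so the volume shrinks with the same point count, with a separate sub-case when the shift crosses another horizontal grid line.

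Your route is genuinely different and cleaner: you replace the grid-point case analysis by the single global monotonicity observation $|\hat P\cap[0,q)|\geq |P^*\cap[0,q)|$ for \emph{all} $q$, and then invoke the continuous definition of $d^*_\infty$ rather than the discrete reformulation~\eqref{discrepancy_formula}. This sidesteps the ``crossing a grid line'' sub-case entirely and makes the irrelevance of admissibility transparent. The paper's grid-based argument has the minor advantage of staying within the model's constraint language throughout, but your version is shorter and arguably more conceptual.
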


\begin{proof}
    The result can be proven completely analogous to Lemma~\ref{lemma:upshift_ub}.
\end{proof}

\begin{lemma}\label{lemma:downshift_ub}
Let $P^*$ be an optimal $n$-point set with $d^*_{\infty}(P^*)=f^*$. Then, an admissible down-shift never violates a closed box constraint~\eqref{eq:upperboundcont_dummy} w.r.t.\ $f^*$. Analogously, an admissible left-shift never violates a closed box constraint \eqref{eq:upperboundcont_dummy} w.r.t.\ $f^*$.
\end{lemma}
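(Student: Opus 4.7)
My plan is to mirror the argument of Lemma~\ref{lemma:upshift_lb}, exchanging the roles of lower- and upper-bound constraints and of dominated versus dominating points. The trivial cases $\delta=0$ and $x^{(i)}_1=1$ are handled exactly as in that lemma. In the remaining situation I set $\hat{P}:=\down(P^*,x^{(i)},\delta)$ and use that $\hat{g}_{ab}=g_{ab}$ for every $b\neq i$.

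I then introduce the analogue of the region $B_2$ from Lemma~\ref{lemma:upshift_lb}, namely
$$B_2^{\mathrm{down}} := [x^{(i)}_1,1]\times[\hat{x}^{(i)}_2,x^{(i)}_2),$$
the set of grid points $\hat{g}_{ab}$ for which $\hat{x}^{(i)}\in[0,\hat{g}_{ab}]$ while $x^{(i)}\notin[0,\hat{g}_{ab}]$. Outside $B_2^{\mathrm{down}}$ the $\hat{P}$- and $P^*$-counts of the closed box $[0,\hat{g}_{ab}]$ agree, so the closed box constraint for $\hat{P}$ with respect to $f^*$ is inherited directly from the discrepancy bound on $P^*$. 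Inside $B_2^{\mathrm{down}}$ the count grows by exactly one, so I need to exhibit $1/n$ of additional slack to keep the constraint satisfied.

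Admissibility supplies this slack through a minimizer $k^*$ of $x^{(k)}_2-x^{(i)}_2$ over $k\neq i$ with $x^{(k)}\geq x^{(i)}$, for which $x^{(k^*)}_2\leq\hat{x}^{(i)}_2+1/n$. For every $\hat{g}_{ab}\in B_2^{\mathrm{down}}$ with $x^{(k^*)}_1\leq\hat{g}_{ab,1}$, I would compare $[0,\hat{g}_{ab}]$ in $\hat{P}$ to the slightly enlarged box $[0,q']$ in $P^*$ with $q':=(\hat{g}_{ab,1},x^{(k^*)}_2)$. Both $x^{(i)}$ and $x^{(k^*)}$ lie in $[0,q']\setminus[0,\hat{g}_{ab}]$, so $|P^*\cap[0,q']|\geq|P^*\cap[0,\hat{g}_{ab}]|+2$; combining this with the $P^*$ discrepancy bound at $q'$, with $|\hat{P}\cap[0,\hat{g}_{ab}]|=|P^*\cap[0,\hat{g}_{ab}]|+1$, and with the volume estimate $\lambda([0,q'])-\lambda([0,\hat{g}_{ab}])=\hat{g}_{ab,1}(x^{(k^*)}_2-\hat{g}_{ab,2})\leq 1/n$ yields the closed box constraint at $\hat{g}_{ab}$ with respect to $f^*$.

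The step without a direct counterpart in Lemma~\ref{lemma:upshift_lb} is the subcase $x^{(k^*)}_1>\hat{g}_{ab,1}$, since in the up-shift proof the dominated point always sat to the lower left of $x^{(i)}$, whereas here the dominating point may sit to the right of the target column. I expect this to be the main obstacle. My plan to handle it is either to choose, for the specific $\hat{g}_{ab}$ at hand, a different dominating point whose first coordinate lies in $[x^{(i)}_1,\hat{g}_{ab,1}]$, or to take $q':=x^{(k^*)}$ and control the resulting L-shaped volume excess by applying the $P^*$ discrepancy bound to the right strip $(\hat{g}_{ab,1},x^{(k^*)}_1]\times[0,\hat{g}_{ab,2}]$, which must then contain enough additional $P^*$-points to pay for the extra volume. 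Once both subcases are settled, the corresponding statement for admissible left-shifts follows by swapping the two coordinate axes throughout.
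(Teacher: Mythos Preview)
Your approach is the paper's: the paper merely writes that the result ``can be proven completely analogous to Lemma~\ref{lemma:upshift_lb}'', and you attempt exactly that analogy. You have in fact been more careful than the paper, because you correctly isolate the one place where the analogy is not automatic. In Lemma~\ref{lemma:upshift_lb} the helper point $x^{(k)}\le x^{(i)}$ sits to the lower-left of $x^{(i)}$ while the danger region $B_2$ lies strictly to the upper-right, so $x^{(k)}_1\le x^{(i)}_1<\hat g_{ab,1}$ holds automatically for every target $\hat g_{ab}\in B_2$. In the down-shift version the helper $x^{(k^*)}\ge x^{(i)}$ and the danger region $B_2^{\mathrm{down}}$ both lie to the right of $x^{(i)}$, so nothing forces $x^{(k^*)}_1\le \hat g_{ab,1}$. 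Your treatment of the main case $x^{(k^*)}_1\le \hat g_{ab,1}$ is correct.

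The remaining subcase is a genuine gap, and neither of your two plans closes it as stated. Plan~1 fails outright whenever $x^{(k^*)}$ is the \emph{only} point dominating $x^{(i)}$---nothing in the hypotheses rules this out---so there need not be any alternative dominating point with first coordinate in $[x^{(i)}_1,\hat g_{ab,1}]$. Plan~2 cannot work with the single inequality you name: admissibility bounds only the vertical gap $x^{(k^*)}_2-\hat x^{(i)}_2\le 1/n$, whereas the right strip $(\hat g_{ab,1},x^{(k^*)}_1]\times[0,\hat g_{ab,2}]$ has width $x^{(k^*)}_1-\hat g_{ab,1}$, which is entirely uncontrolled, so its volume can exceed any fixed multiple of $1/n$ and one $P^*$-discrepancy bound on that strip cannot by itself manufacture the needed surplus of points. (A smaller slip: the case $x^{(i)}_1=1$ is trivial in Lemma~\ref{lemma:upshift_lb} because $x^{(i)}$ then lies in no \emph{open} anchored box; it is not trivial here, since closed boxes with $q_1=1$ can still gain $\hat x^{(i)}$.) In short, you have located a real asymmetry that the paper's one-line proof passes over, but your proposal does not yet resolve it.
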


\begin{proof}
    The result can be proven completely analogous to Lemma~\ref{lemma:upshift_lb}.
\end{proof}

As in Lemma~\ref{lemma:upshift_lb}, Lemma~\ref{lemma:downshift_ub} ensures that there is some other box defining a worse discrepancy value and that, with our choice of shift distance, no new box can have a discrepancy surpassing this worse value (which is not necessarily the worst).

Lemmas~\ref{lemma:upshift_ub} to~\ref{lemma:downshift_ub} imply that there always exists an optimal $n$-point set that satisfies additional ``distance'' constraints, while, of course, not all optimal $n$-point sets necessarily satisfy such constraints. In addition, existing $n$-point sets can be modified without deteriorating the objective value by applying up and right-shifts in any sequence. This implies, among others, that the points that are closest to the lower and left boundary of the unit square can be moved up and right, respectively. Since there are no further points below them, they can actually be moved up to a distance of $f$ from the respective boundaries, without increasing the discrepancy value $f$ of the considered configuration. Moreover, for every pair of points $x^{(i)}\neq x^{(j)}$ with $x^{(i)}_1\leq x^{(j)}_1$
and $x^{(i)}_2\leq x^{(j)}_2$, we can apply up or down-shifts so that the vertical distance between the two points is at least $\frac{1}{n}$, and similarly we can apply right- or left-shifts so that the horizontal distance between the two points is at least $\frac{1}{n}$. Note that this is also possible when one (or both) of the points are close to the upper or right boundary of the unit cube, since we can apply down or left-shifts in this case. 
The constraints~\eqref{eq:2dcont_boundx1}-\eqref{eq:4gstrong} implement these conditions, constraints \eqref{eq:4s} and \eqref{eq:4t} follow from the inequalities for the grid points on the upper boundary, and constraint \eqref{eq:4w} follows by the triangle inequality. 
\setcounter{equation}{4}
\begin{subequations}\label{eq:2dcontshift}
\setcounter{equation}{11}
\begin{align}
    & x_1 = f &&
    \label{eq:2dcont_boundx1}\\
    & x_{2i} \geq f && \forall i=1,\dots,n
    \label{eq:2dcont_boundx2i}\\
    & x_{2i+1}-x_{2i-1} \geq y_{i,(i+1)}-1+\frac{1}{n} && \forall i=1,\dots,n-1 \label{eq:4fstrong}\\
    & x_{2j}-x_{2i} \geq y_{ij}-1+\frac{1}{n} && \forall i,j=1,\dots,n,\, i<j\label{eq:4gstrong}\\
    & x_{2i-1} \leq f + \frac{i-1}{n} && \forall i=1,\ldots,n \label{eq:4s}\\
    & x_{2i-1} \geq \frac{i}{n}-f && \forall i=1,\dots,n \label{eq:4t}\\
    & x_{2j-1}-x_{2i-1} + x_{2j}-x_{2i} \geq y_{ij} - 1 + \frac{2}{n} 
    && \forall i,j=1,\dots,n-1,\, i<j \label{eq:4w}
\end{align}
\end{subequations}

Note that while constraints \eqref{eq:4fstrong} can be introduced in addition to constraints \eqref{eq:mindist} but can not replace them, constraints \eqref{eq:4gstrong} strengthen constraints \eqref{eq:yone} and \eqref{eq:ytwo} and should thus be used instead. Finally, we note that constraints \eqref{eq:4s} make some of the constraints of type \eqref{eq:lowerboundcont_dummy} redundant. Indeed, when $j=0$, constraint \eqref{eq:lowerboundcont_dummy} gives us exactly constraint \eqref{eq:4s}. With this improvement, we can now set $y_{i0}=0$ for all $i \in \{1,\ldots,n+1\}$.

\subsection{An assignment formulation}\label{sec:2dgrid}

We now introduce a second formulation, which will also exploit the results from Lemmas~\ref{lemma:upshift_ub} to~\ref{lemma:downshift_ub}. An alternative formulation is obtained when defining the two coordinates of all points in $P$ separately and independently from each other in two vectors $x,y\in[0,1]^n$. The point set $P$ is then obtained by assigning exactly one $y$-coordinate to every $x$-coordinate and vice versa, using assignment variables $a_{ij}\in\{0,1\}$, $i,j=1,\dots,n$. In other words, rather than optimizing a point set, we optimize the associated grid (and indirectly the set): a point $(x_i,y_j)$ is in $P$ if and only if $a_{ij}=1$. The advantage of this formulation is its simplicity, since the sorting can be implemented already when defining the $x$- and $y$-coordinates of the vertical and horizontal grid lines, respectively, in an increasing order.

\begin{subequations}\label{eq:M5_2dcont}
\begin{align}
    \min\;\; & f && \nonumber\\ 
    \text{s.t.}\;\; & \displaystyle \frac{1}{n} \sum_{u=1}^i\sum_{v=1}^j a_{uv} - x_{i}y_{j} \leq f && \forall i,j=1,\dots,n \label{eq:M5_upperboundcont}\\
    & \displaystyle \frac{-1}{n} \sum_{u=1}^{i-1}\sum_{v=1}^{j-1} a_{uv} + x_{i}y_{j} \leq f && \forall i,j=1,\dots,n+1 \label{eq:M5_lowerboundcont}\\
    & x_{n+1}=1,\; y_{n+1}=1 && \label{eq:M5_dummy}\\
    & x_{i+1}-x_{i} \geq \varepsilon && \forall i=1,\dots,n-1 \label{eq:M5_4f}\\
    & y_{i+1}-y_{i} \geq \varepsilon && \forall i=1,\dots,n-1 \label{eq:M5_4fbis}\\
    & \sum_{i=1}^n a_{ij} = 1 && \forall j=1,\dots,n \label{eq:M5_column}\\
    & \sum_{j=1}^n a_{ij} = 1 && \forall i=1,\dots,n \label{eq:M5_row}\\
    & x_i,y_i\in[0,1] && \forall i=1,\dots,n\; \nonumber \\
    & a_{ij}\in\{0,1\} && \forall i,j=1,\dots,n\; \nonumber \\
    & f\geq 0 .\hspace*{-4cm} &&\nonumber
\end{align}
\end{subequations}

Constraints \eqref{eq:M5_upperboundcont} and \eqref{eq:M5_lowerboundcont} correspond to the discrepancy inequalities, with the double sum counting the number of selected points inside the box defined by $(x_i,y_j)$. Constraints \eqref{eq:M5_4f} and \eqref{eq:M5_4fbis} impose the minimal distance between two grid lines, as derived in the previous subsection. Finally, constraints \eqref{eq:M5_column} and \eqref{eq:M5_row} impose that there is exactly one point in each column and row of the grid. The ``exactly'' comes from the general position assumption we derived previously.

Note that the values of $x_{n+1},y_{n+1}$ are fixed to one. These parameters are used as dummy values to include the open box constraints \eqref{eq:M5_lowerboundcont} for the grid points on the upper and right boundaries in a simple way.
Note also that many of the constraints \eqref{eq:M5_upperboundcont} and \eqref{eq:M5_lowerboundcont} are redundant since the defining points may not be located below or on the respective grid lines. It is possible to check if a box defined by $x_i$ and $y_j$ is critical by calculating $\sum_{k=1}^{i}a_{kj}+\sum_{k=1}^{j}a_{ik}$. This sum is equal to 2 if and only if there is a point on each of the outer edges and thus if and only if the box is critical. In practice, the model was slower when adding this requirement. It may be worthwhile to use this requirement for other solvers or different settings. 

As in the previous model, the formulation can be strengthened by the following constraints:
\setcounter{equation}{5}
\begin{subequations}
\setcounter{equation}{7}
\begin{align}
    & f\geq 1/n && \text{if $n\geq 4$} \label{eq:white}\\ 
    & x_1 = f \label{eq:M6_o}&&\\
    & y_1 = f &&  \\
    & x_j-x_i \geq \frac{1}{n} - \left(1-\sum_{u=1}^k (a_{iu}-a_{ju})\right) && i,j=1,\dots,n,\, i<j,\nonumber \\[-0.3cm] &&& k=1,\dots,n \label{eq:M6_q}\\
    & y_j-y_i \geq \frac{1}{n} - \left(1-\sum_{u=1}^k (a_{iu}-a_{ju})\right) && i,j=1,\dots,n,\, i<j,\nonumber \\[-0.3cm] &&& k=1,\dots,n \label{eq:M6_r}\\
    & x_i \leq f+\frac{i-1}{n} &&  i=2,\dots,n\\
    & x_i \geq \frac{i}{n}-f && i=2,\dots,n\\
    & y_i \leq f+\frac{i-1}{n} && i=2,\dots,n\\
    & y_i \geq \frac{i}{n}-f \label{eq:M6_v} && i=2,\dots,n
\end{align}
\end{subequations}
Constraint \eqref{eq:white} uses Theorem~\ref{th:white}. Constraints \eqref{eq:M6_o} to \eqref{eq:M6_v} are direct analogies to constraints \eqref{eq:2dcont_boundx2i} to \eqref{eq:4t}. Only constraints \eqref{eq:M6_q} and \eqref{eq:M6_r} have to be adapted since we no longer use ordering variables (the ordering is naturally defined by the grid itself).

\subsection{Experimental results}\label{sec:results}
We describe in this section the results obtained by our different models. All experiments were run with Gurobi 10.0.0 with an accuracy of $10^{-4}$ using Julia and the JuMP package. Experiments were run on a single machine of the MeSU cluster at Sorbonne Université, Intel Xeon CPU E5-2670 v3 with 24 cores. Tables~\ref{tab:lown} and~\ref{tab:cont_improved} give the optimal discrepancy values for point sets for $n=1$ to $n=21$ for different models, as well as the associated runtimes. M5 corresponds to the continuous formulation described in \eqref{eq:2dcont_dummy} and M6 corresponds to the assignment formulation \eqref{eq:M5_2dcont}. The ``+'' then indicates which extra constraints were added. The runtime is indicated for each formulation, the returned value is logically always the same. We note that values for $n=1$ to $n=6$ correspond to those known previously. The continuous formulation M5 is significantly faster than the assignment one M6, both with and without the extra constraints. We note that the extra constraints hinder M6 while their effectiveness for M5 is debatable: initial tests on a small laptop provided a factor 4 speedup, but there is no obvious improvement for the tests on the cluster shown below. We compare the discrepancies of our optimal sets to one of the most famous sets, obtained from the Kronecker sequence with golden ratio which we call Fibonacci set (see Section \ref{sec:latt} for a brief description).

\begin{table}[h]
\centering
\begin{tabular}{|l|r|r|r|r|r|r|r|r|r|}\hline
$n$ & $2$ & $3$ & $4$ & $5$ & $6$ & $7$ & $8$ & $9$  \\ \hline\hline
M5 & 0.01 & 0.02 & 0.63 & 0.41 &0.78&  0.81 & 0.3 & 0.93 \\ \hline
M5$_{+h\dots r}$ &0.01 & 0.1 & 0.01 & 0.14 &0.07& 0.2 & 0.32 & 0.7 \\ \hline
M6$_{+h\dots p}$  &0.05 &0.05 & 0.38&0.29&0.33 & 0.58 & 0.80& 7.19 \\ \hline
$d^*_{\infty}(P_{\text{cont}})$ 
& 0.3660
& 0.2847
& 0.2500
& 0.2000
& 0.1667
& 0.1500
& 0.1328
& 0.1235
\\ \hline
$d^*_{\infty}(Fib)$& 0.6909& 0.5880 & 0.4910& 0.3528& 0.3183&0.2728&0.2553&0.2270\\ \hline
\end{tabular}
\caption{$d=2$, $n\geq 2$ points located in the continuous box $[0,1]^2$. All problems solved with Gurobi to global optimality with a tolerance of $10^{-4}$.}
\label{tab:lown}
\end{table}
\begin{table}[h]
\centering
{
\begin{tabular}{|l|r|r|r|r|r|r|r|}\hline
$n$ & $10$ & $11$ & $12$ & $13$ & $14$ & $15$ \\ \hline\hline
M5 & 1.46 & 5.06 & 7.29 & 16.98 & 62.22 & 70 \\ \hline
M5$_{+h\dots r}$ & 0.91 &4.89 &12.01 &21.53 &69.94 & 110.56\\ \hline
M6$_{+h\dots p}$ & 5.47 & 15.41 & 27.61 & 65.31 & 6\,279.16 & 3\,445.26 \\ \hline
$d^*_{\infty}(P_{\text{cont}})$ & 0.1111 
& 0.1030
& 0.0952
& 0.0889
& 0.0837
& 0.0782
\\
\hline
$d^*_{\infty}(Fib)$& 0.2042 & 0.1857& 0.1702& 0.1571&0.1459& 0.1390\\ \hline
\hline
$n$ & $16$ & $17$ & $18$ & $19$ & $20$ & 21\\ \hline \hline
M5$_{+h\dots r}$ & 426.01 &616.45 &4\,610.23 &11\,240.5 &21\,892.76& 77\,988.0 
\\ \hline
M6$_{+h\dots p}$ & 12\,974.11& 22\,020.44& & & & \\ \hline
$d^*_{\infty}(P_{\text{cont}})$ & 0.0739
& 0.06996
& 0.0667
& 0.0634
& 0.0604
& 0.0580
\\
\hline
$d^*_{\infty}(Fib)$& 0.1486 & 0.1398& 0.1320& 0.1251& 0.1188& 0.1132\\ \hline
\end{tabular}
}
\caption{$d=2$, $n\geq 10$ points located in the continuous box $[0,1]^2$. 
While in almost all instances an optimal solution was obtained very quickly, it often took a very long time to close the duality gap and to prove optimality. All problems solved with Gurobi 10.0.0 to global optimality with a tolerance of $10^{-4}$.\label{tab:cont_improved}}
\end{table}

During the computational experiments we observed that very good -- if not optimal -- solutions were often found quite early during the solution process,
and a large amount of time was then used by the solver to close the duality gap, i.e., to prove global optimality of the solution. Thus, very good point sets can be obtained by running a solver with a pre-specified time limit, however without showing optimality. 
Table~\ref{tab:cont_heuristic} and Figure~\ref{fig:cont_heuristic} shows discrepancy values obtained for Model M5$_{+h}$ within a time limit of 1\,000s, 10\,000s, 20\,000s and 40\,000s, respectively. It turns out that for $n$ larger than 100, the complexity of the model increases too much to make this a viable approach. For the 100 point set, all the results before the 40\,000 seconds cutoff are worse than the Fibonacci set, only the last improvement makes it noticeably better. We note that Model M5$_{+h}$ outperformed Model M5$_{+h,\ldots,r}$ in nearly all cases: only some duality gaps are smaller with 1~000 seconds runtime.

\begin{table}[h]
\centering
\begin{tabular}{|l|r|r|r|r|r|r|r|}\hline
$n$ & 30 & 40 & 50 & 60 &80 & 100
\\ \hline\hline
\multicolumn{7}{|l|}{time limit 1\,000s:}\\ \hline
$d^*_{\infty}(P_{\text{cont}})$ &  0.04305 
& 0.035
& 0.0317
& 0.02934
& 0.02467
& 0.04070
\\ \hline
gap &  0.01718
& 0.01780 
& 0.02432 
& 0.02567
& 0.01807
& 0.03568
\\ \hline\hline
\multicolumn{7}{|l|}{time limit 10\,000s:}\\ \hline
$d^*_{\infty}(P_{\text{cont}})$ & 0.0426 
& 0.0334
& 0.0283
& 0.02469
& 0.02467
& 0.02570
\\ \hline
gap & 0.01296
& 0.01087 
& 0.01550
& 0.01682 
&  0.01781
& 0.02068
\\ \hline\hline
\multicolumn{7}{|l|}{time limit 20\,000s:}\\ \hline
$d^*_{\infty}(P_{\text{cont}})$ & 0.0426 & 0.0333 & 
0.0283 & 0.0246 & 0.02379 & 0.02346
\\ \hline
gap & 0.01208 & 0.01015 & 
0.01530 & 0.01682  &0.01653 & 0.01844
\\ \hline\hline
\multicolumn{7}{|l|}{time limit 40\,000s:}\\ \hline
$d^*_{\infty}(P_{\text{cont}})$ & 0.0424 & 0.0332 & 0.028 & 0.02435 &0.02131 &0.01933
\\ \hline
gap & 0.01124 & 0.00947 & 0.01519 & 0.01666 & 0.01403 &  0.01431  
\\ \hline \hline
\multicolumn{7}{|l|}{Reference}\\ \hline
Fibonacci & 0.079231 & 0.063836 & 0.053068 & 0.044223 & 0.033167 & 0.027485 
\\ \hline
\end{tabular}
\caption{$d=2$, points located in the continuous box $[0,1]^2$. 
All problems solved for Model M5$_{+h}$  with Gurobi with a time limit of 1\,000s, 10\,000s, 20\,000 and 40\,000 seconds, respectively. Gap represents the difference between the current best solution and a lower bound found by the solver.
\label{tab:cont_heuristic}}
\end{table}

\begin{figure}
    \centering
    \includegraphics[width=0.6\textwidth]{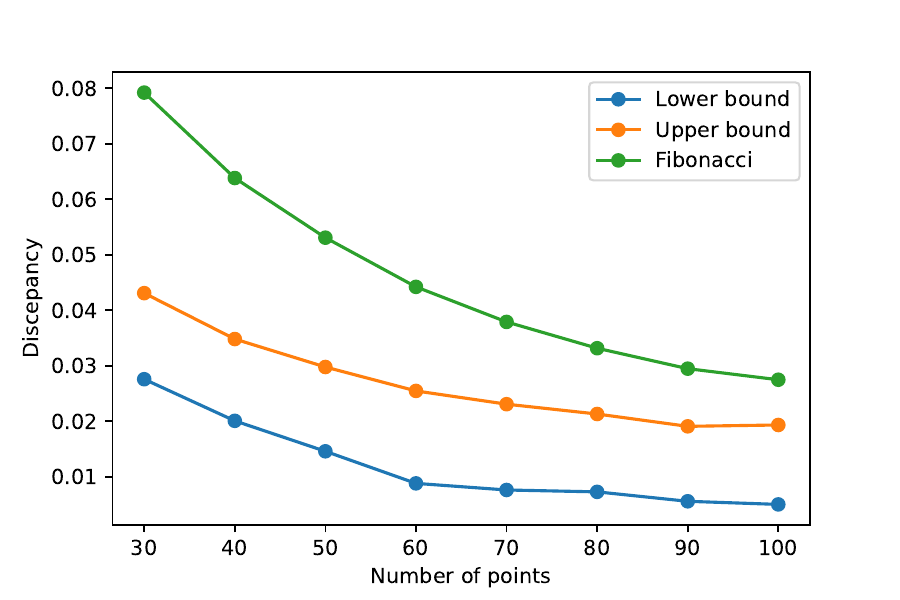}
    \caption{Comparison of the bounds at 40\,000s obtained in Table~\ref{tab:cont_heuristic} with the values of the Fibonacci set.}
    \label{fig:cont_heuristic}
\end{figure}

\subsection{Structural Differences between Known and Optimal Point Sets}\label{sec:images}

Finally, plotting the local discrepancy values over $[0,1]^2$ reveals that our optimal sets have a very different structure to known low-discrepancy sets. In each plot, we calculate the local discrepancy values over $[0,1)^2$, with brighter colors indicating a worse discrepancy. Note that each plot has its own color scale. Furthermore, ``triangles'' whose corner is to the top-right correspond to open boxes with too few points whereas those with a corner to the lower-left correspond to closed boxes with too many points. To facilitate a visual inspection, we also provide a \emph{truncated} version of these plots, where all local discrepancy values smaller than $d^{*}_{\infty}(P)-1/n$ are set to 0. This allows us to see better where the worst discrepancy values are reached and if the local discrepancy values are balanced over the whole space. 
While in sets like Fibonacci or the initial segments of the Sobol' sequence only a few closed boxes give active constraints for the discrepancy (i.e. the inequality is an equality), a much bigger set of boxes are very close to the maximal discrepancy value for our sets. In particular, the truncated plots show that a very large part of $[0,1)^2$ has a local discrepancy close to $1/n$ for our optimal sets. For both Fibonacci and Sobol' sets, only overfilled boxes appear, and this seems to be a characteristic independent of $n$. However, for our sets, both types of triangles appear and quite often sharing a box corner.

\begin{figure}
    \centering
    \includegraphics[width=0.8\textwidth]{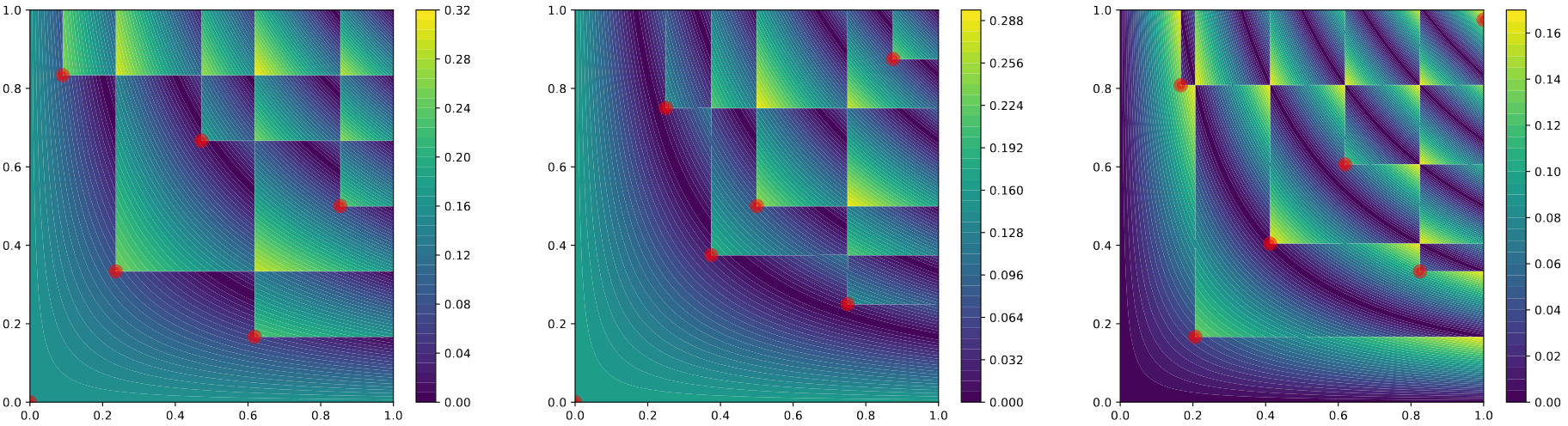}
    \caption{Fibonacci, Sobol' and optimal sets' local discrepancies for $n=6$.}
\end{figure}

\begin{figure}\centering
\includegraphics[width=0.8\textwidth]{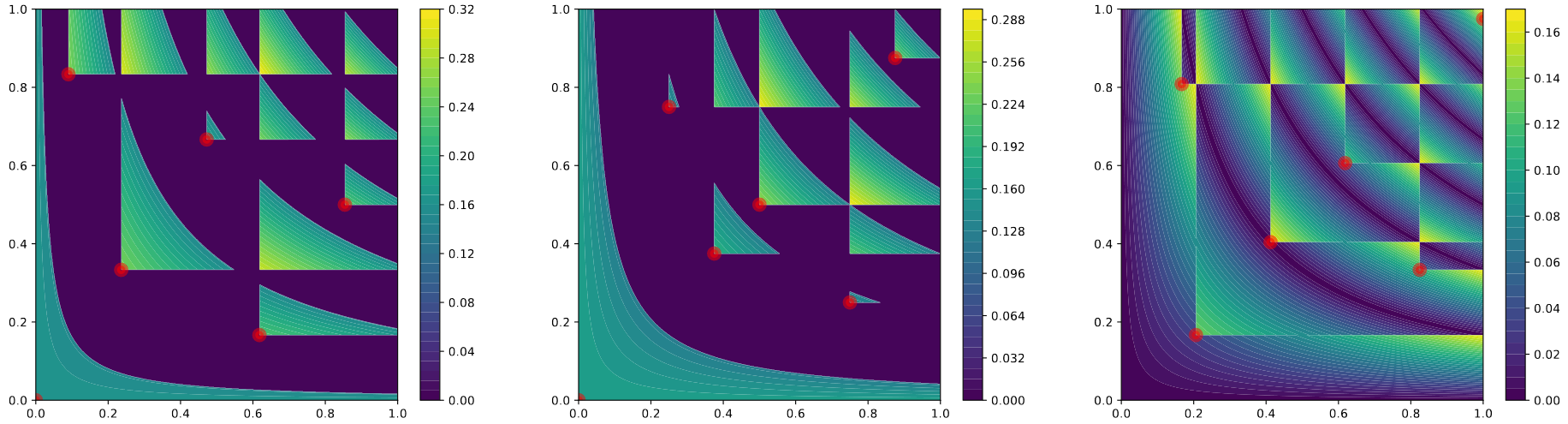}
    \caption{Fibonacci, Sobol' and optimal sets' \emph{truncated} local discrepancies for $n=6$. Local discrepancy values more than $1/n$ away from the star discrepancy were set to 0. Colored regions are therefore close to the worst discrepancy value.}
\end{figure}

\begin{figure}
    \centering
    \includegraphics[width=0.8\textwidth]{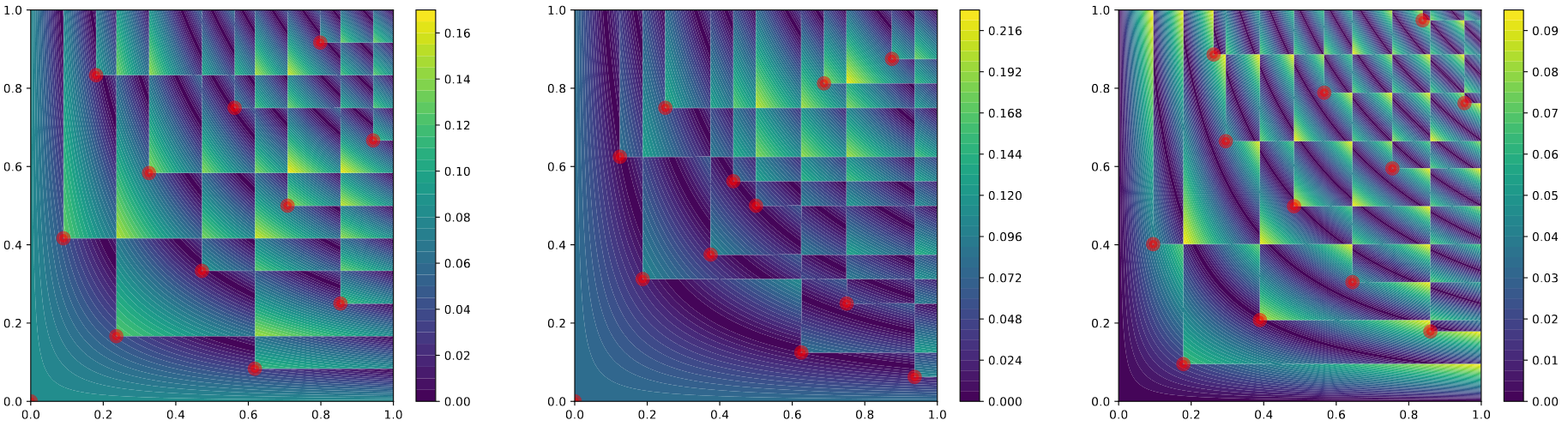}
    \caption{Fibonacci, Sobol' and optimal sets' local discrepancies for $n=12$.}
\end{figure}
    
\begin{figure}
    \centering    \includegraphics[width=0.8\textwidth]{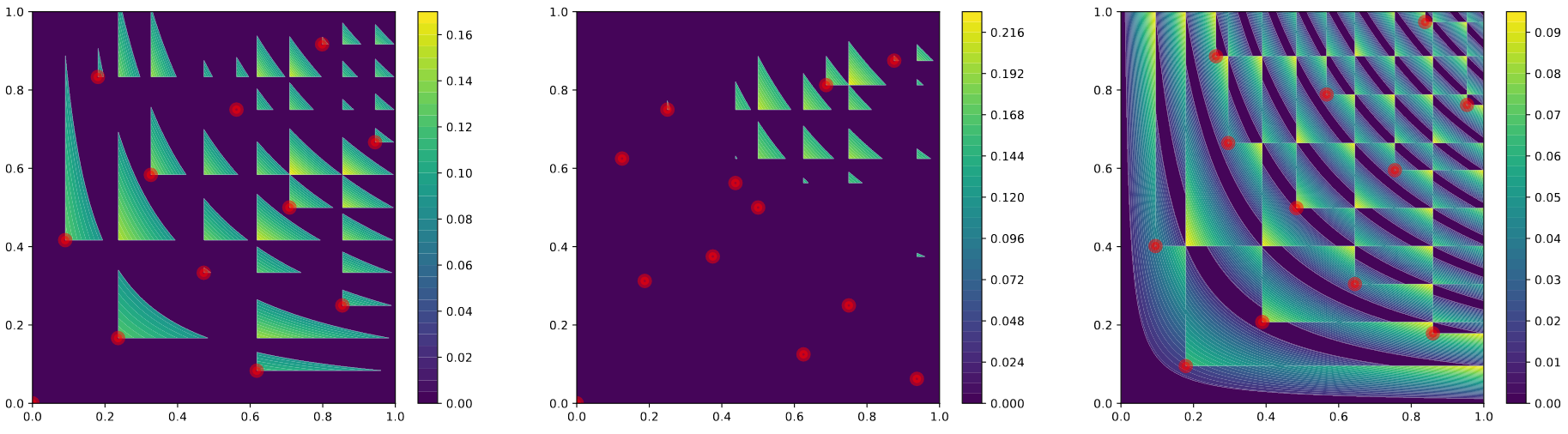}
    \caption{Fibonacci, Sobol' and optimal sets' truncated local discrepancies for $n=12$.}
\end{figure}

\begin{figure}
    \centering
    \includegraphics[width=0.8\textwidth]{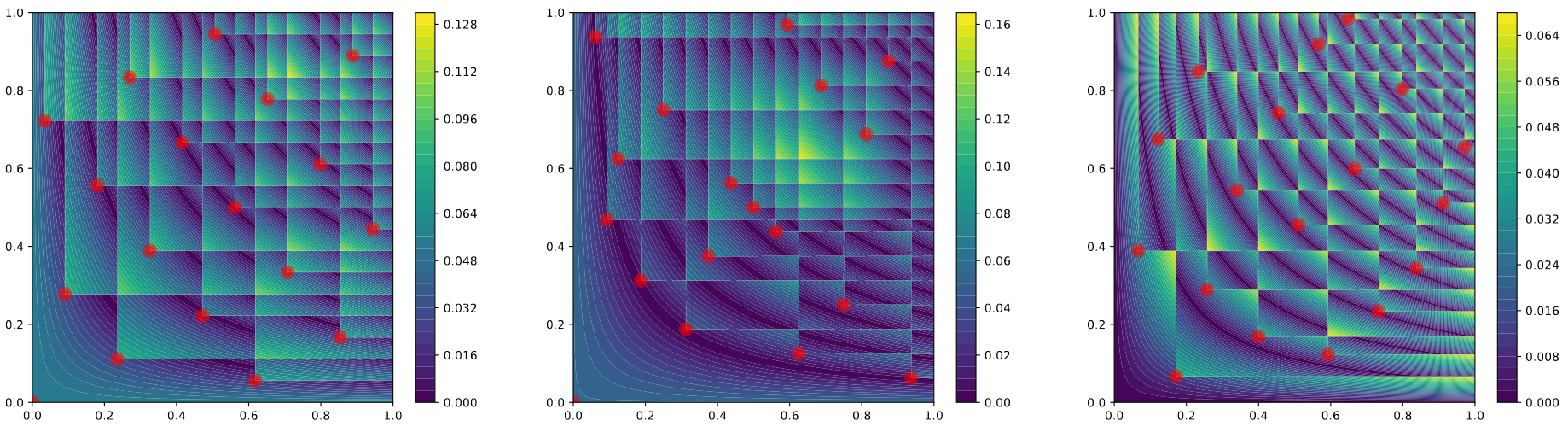}
    \caption{Fibonacci, Sobol' and optimal sets' local discrepancies for $n=18$.}
\end{figure}
    
\begin{figure}
    \centering    \includegraphics[width=0.8\textwidth]{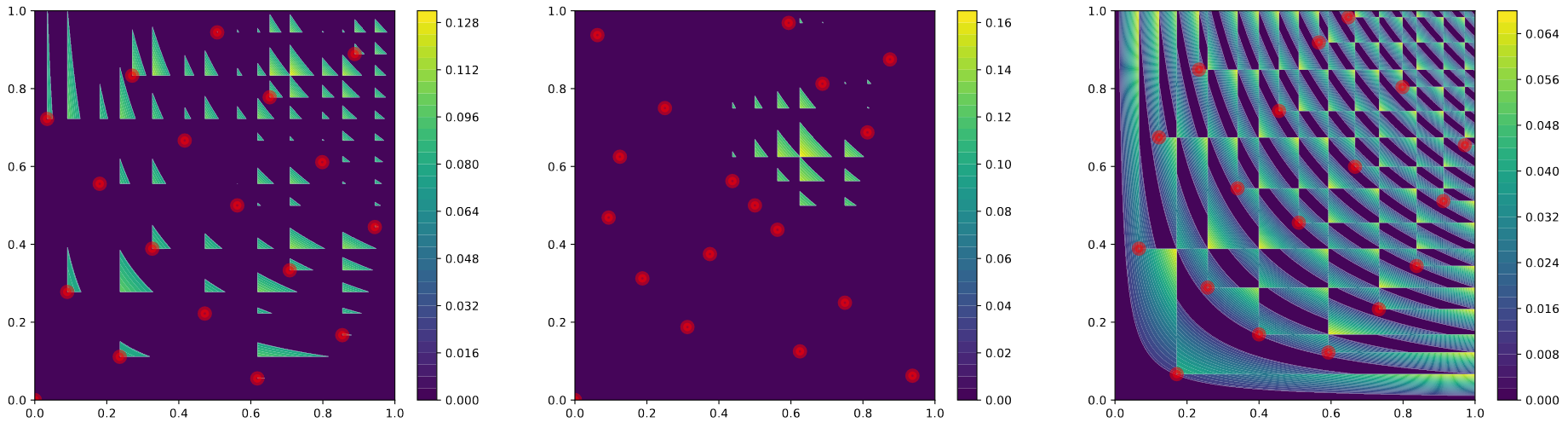}
    \caption{Fibonacci, Sobol' and optimal sets' truncated local discrepancies for $n=18$.}
\end{figure}

\section{Extensions}\label{sec:ext}

This section presents a number of possible extensions of our model. Section~\ref{sec:3d} introduces the most natural one, an extension to higher dimensions. Section~\ref{sec:latt} describes a way of obtaining optimal lattices, a very frequent structure in the discrepancy community. Finally, Section~\ref{sec:otherdisc} considers alternative discrepancy notions (extreme, periodic, and multiple-corner discrepancy for details). All these different measures appear frequently in the discrepancy literature \cite{Mat}, sometimes under the name ``axis-parallel'' boxes for the extreme discrepancy. For example, the paper by Hinrichs and Oettershagen \cite{Hinrichs2014} mentioned in the introduction tackled the optimal constructions for the periodic $L_2$ discrepancy (not star).
\subsection{An extension to three dimensions}\label{sec:3d}

Both models presented in Section~\ref{sec:Formu} naturally extend to higher dimensions. However, a considerably simpler model is obtained when generalizing model~\eqref{eq:M5_2dcont} to the three-dimensional case and will be the only one described here. Our experiments suggest it is also the fastest model to solve.
As in model~\eqref{eq:M5_2dcont}, we define the $x$-, $y$- and $z$-coordinates separately and independently from each other in vectors $x,y,z\in[0,1]^n$. The point set is then obtained by assigning exactly one $y$- and one $z$-coordinate to every $x$-coordinate, and vice versa, using assignment variables $a_{ijk}\in\{0,1\}$, $i,j,k=1,\dots,n$.

\begin{subequations}\label{eq:M5_3dcont}
\begin{align}
    \min\;\; & f && \nonumber\\ 
    \text{s.t.}\;\; & \displaystyle \frac{1}{n} \sum_{u=1}^i\sum_{v=1}^j\sum_{w=1}^k a_{uvw} - x_{i}y_{j}z_k \leq f && \forall i,j,k=1,\dots,n \label{eq:M5_upperboundcont_3}\\
    & \displaystyle \frac{-1}{n} \sum_{u=1}^{i-1}\sum_{v=1}^{j-1}\sum_{w=1}^{k-1} a_{uvw} + x_{i}y_{j}z_k \leq f && \forall i,j,k=1,\dots,n+1 \label{eq:M5_lowerboundcont_3}\\
    & x_{n+1}=1,\; y_{n+1}=1, \; z_{n+1}=1 &&\\
    & \sum_{j=1}^n\sum_{k=1}^n a_{ijk} = 1 && \forall i=1,\dots,n\\
    & \sum_{i=1}^n\sum_{k=1}^n a_{ijk} = 1 && \forall j=1,\dots,n\\
    & \sum_{i=1}^{n}\sum_{j=1}^n a_{ijk} = 1 && \forall k=1,\dots,n\\
    & x_i,y_i,z_i\in[0,1] && \forall i=1,\dots,n; \nonumber \\
    &a_{ijk}\in\{0,1\} && \forall i,j,k=1,\dots,n \nonumber \\
    &f\geq 0 \hspace*{-4cm} && \nonumber 
\end{align}
\end{subequations}

The values of $x_{n+1},y_{n+1},z_{n+1}$ are fixed to one and used as dummy values.

One should be careful that our result on the general position of some optimal sets no longer holds in higher dimensions. While the admissible shifts still do not increase the discrepancy of the set (Lemmas~\ref{lemma:upshift_ub} to~\ref{lemma:downshift_ub} still hold), a pair of points $x^{(1)}$ and $x^{(2)}$ sharing a coordinate no longer guarantees that one dominates the other. This means we can no longer use the constraints lower-bounding the gaps between different coordinates. Furthermore, some of the local discrepancy constraints will count the points incorrectly. Nevertheless, for every distinct box, there will be a constraint correctly computing the local discrepancy. Indeed, suppose there exist $x_k=x_i$ and $i<k$, two coordinates in the same dimension of different points. For the closed box defined by $x_i$, we will not count $x_k$, which leads to a lower discrepancy lower bound. However, the closed box for $x_k$ counts these points correctly: the lower bound is unchanged. A similar argument holds by swapping the roles for open boxes.

As before, the formulation can be strengthened by the following constraints, as the lowest point in each dimension can still be shifted:

\setcounter{equation}{6}
\begin{subequations}
\setcounter{equation}{9}
\begin{align}
    & x_1 = f &&\\
    & y_1 = f &&  \\
    & z_1 = f && \\
    & f \geq 1/n && \text{valid if $n \geq 3$}
\end{align}
\end{subequations}
This is based on the fact that Lemmas~\ref{lemma:upshift_ub}, \ref{lemma:upshift_lb}, \ref{lemma:downshift_lb} and \ref{lemma:downshift_ub} immediately generalize to three and more dimensions.

We note that extensions to higher dimensions could be obtained in a similar way. However, we would obtain $O(n^d)$ constraints just to bound the discrepancy, without even considering the products of many variables: each local discrepancy constraint has a product of $d$ variables.

Table~\ref{tab:3d_improved} describes our results in dimension 3. M7 corresponds to the 3$d$ formulation introduced above in \eqref{eq:M5_3dcont}. We are able to solve it for $n \leq 8$. 
While the model is relatively easy to solve with the general position hypothesis (518.98s for $n=8$), it is vastly more expensive without. Indeed, it took 1 569 728s to solve the problem in the $n=8$ case. Finding an optimal solution is relatively inexpensive (around a day), but closing the optimality gap takes a lot longer.

Table~\ref{tab:cont_heuristic} in the 2$d$ case suggests this is not an insurmountable obstacle if the objective is only to obtain excellent sets, and not optimal ones. Runtime values are not included as the computer was not running exclusively this experiment and may not be perfectly representative.
\begin{table}[h!]
\centering
\begin{tabular}{|l|r|r|r|r|r|r|r|r|}\hline
$n$ & $1$ & $2$ & $3$ & $4$ & $5$ & $6$ & $7$ & $8$   \\ \hline\hline
%M7$_{+j-s}$&0.06 &0.16 &0.65 &6.61 &30.58 &30.58 &243.47 &518.98\\ \hline
$d^*_{\infty}(P_{\text{cont}})$ & 0.6823 & 0.4239 & 0.3445 & 0.3038 & 0.2618 & 0.2326 & 0.2090 & 0.1875\\ \hline
\end{tabular}
\caption{$d=3$, $n\geq 1$ points located in the continuous box $[0,1]^3$. All problems solved with Gurobi 10.0.0, using a simple reformulation of the cubic terms into quadratic terms.}
\label{tab:3d_improved}
\end{table}

\subsection{Optimal lattice construction}\label{sec:latt}

A common construction for low-discrepancy sets is to build lattices $\{zi/n:z\in \mathbb{N}^d, i\in \{0,\ldots,n-1\}\}$. Choosing the parameter $z$ well is a difficult problem \cite[Section 5]{Nie92}. A typical choice is usually integers $z=(z_1,\ldots,z_d)$ with no factor in common with $n$. These sets are particularly important in practice as they allow to exploit increasing smoothness of a function when trying to numerically calculate an integral. They have been very extensively studied, both with integer parameters or more generally with elements of $\mathbb{R}^s/\mathbb{Z}^s$ (general lattice rules, which is closer to what we obtain here). In particular, choosing rationals $z_j/n$ with small continued fraction expansion leads to good lattices (we refer once again to the book by Niederreiter \cite{Nie92}). Setting aside integer parameters, this continued fraction expansion property is reflected by the excellent performance of the Kronecker sequence with golden ratio. It is defined by $(\{n\phi\})_{n\in \mathbb{N}}$, where $\phi$ is the golden ratio and $\{n\phi\}$ is the fractional part of $n\phi$. This can be extended to a 2-dimensional set of lattice form also called Fibonacci set: $\big\{(i/n,\{i\phi\}):i \in \{0,\ldots,n-1\}\big\}$. Construction methods for these lattices often involve either component by component search, exhaustive search or \emph{Korobov} constructions, where the lattice parameter is given by $(1,a,a^2,\ldots,a^{d-1})$ for some $a$. Our models suggest another construction method for these lattices.

We now want to build a point set $P_r=\big\{(i/n,\{ir\})|i \in \{0,\ldots,n-1\}\big\}$, with $r \in [0,1)$. While optimizing on integers $i$ coprime with $n$ is not so convenient, we can obtain even better sets by considering all reals in $[0,1]$. This can be easily implemented in the first continuous model, by adding constraints on the relations between two consecutive points in the set. We have that $x_{2i-1}=(i-1)/n$, $x_2=0$ and $x_{2i}+r=x_{2(i+1)} \mod 1$ for $ i \in \{1, \dots,n-1\}$. The first two equalities pose no problem and can be directly added as is in the model. For the third, we add binary variables $(k_i)_{i \in \{2,\ldots,n\}}$, where $k_i=1$ if $x_{2i}+r>1$. We then obtain the equality $x_{2i}+r=x_{2(i+1)}+k_i$, which can be added to our model. While this was implemented only for model \eqref{eq:2dcont_dummy}, this change can also be done for model \eqref{eq:M5_2dcont}. For model~\eqref{eq:M5_2dcont}, the only choices are then the value of $y_1$ and which $a_{1j}$ we pick, as they will define all the other variables. Equation \eqref{eq:mindist} can be removed since we know that the distance between two consecutive $x$-coordinates is $1/n$. The following constraints then need to be added to model \eqref{eq:2dcont_dummy}.

\begin{subequations}\label{eq:lattice_1}
\begin{align}
& x_{2i-1}=(i-1)/n && i=1,\ldots,n \label{eq:latt_xcoo}\\
& x_2=0 \\
& x_{2i}+r=x_{2i+2}+k_i && i=1,\ldots,n-1 \label{eq:latt_ycoo} \\
& k_i \geq x_{2i}+r-1+\varepsilon && i=1,\ldots,n-1 \label{eq:latt_biny}\\
& k_i \in \{0,1\} && i=1,\ldots,n-1 \nonumber\\
& r \in [0,1] \nonumber 
\end{align}
\end{subequations}

We note that a very small $\varepsilon$ needs to be added in constraint~\eqref{eq:latt_biny} because the strict inequality is not handled well by the solver. The other possible solution is to consider the inequality without the $\varepsilon$ and verify that there is no point with $1$ as a coordinate.

Table~\ref{tab:simple_lattice} gives the best discrepancy values obtainable for lattice sets with one parameter fixed to $1/n$. While these are worse than the optimal values for $n=20$, they are still better than the Fibonacci sequence by a decent margin.

\begin{table}[h!]
    \centering
    \begin{tabular}{|c|c|c|c|c|}
    \hline
      $n$ & $r$ & $d^*_{\infty}(LAT_1)$ & $d_{\infty}^*(Fib)$  & Runtime (s)\\ \hline
        20 & 0.653 & 0.094898& 0.1188  & 7.05 \\ \hline
        25 & 0.64269 & 0.077410& 0.095078  & 21.82 \\ \hline
        30 & 0.733 & 0.06492& 0.079231 & 149.85 \\ \hline
        35 & 0.82759 & 0.056137& 0.067913  & 797.66 \\ \hline
        40 & 0.79404 & 0.049594& 0.063836  & 2349.33 \\ \hline
    \end{tabular}
    \caption{Best discrepancies for lattice constructions in 2d for different $n$. $r$ corresponds to the lattice parameter.}
    \label{tab:simple_lattice}
\end{table}

This approach was then generalized to have lattices with two parameters: rather than $(1/n,\{ir\})$, we now consider $(\{ir_1\},\{ir_2\})$. These lattices will naturally be better than the single parameter ones. The constraints added to remove the fractional part now need to be used on the first variable as well. This requires removing equation \eqref{eq:latt_xcoo} and adding the following equations.

\setcounter{equation}{7}
\begin{subequations}\label{eq:lattice_2}
\begin{align}
\setcounter{equation}{3}
&x_1=0,x_2=0\\
& x_{2i-1}+r_2=x_{2i+1}+h_i && i=1,\ldots,n-1 \label{eq:latt_xcoo2} \\
& h_i>x_{2i-1}+r_2-1 && i=1,\ldots,n-1 \label{eq:latt_binx}\\
& h_i \in \{0,1\} && i=1,\ldots,n-1 \nonumber\\
& r_2 \in [0,1] \nonumber 
\end{align}
\end{subequations}

Table~\ref{tab:double_lattice} gives the two parameters as well as the best discrepancies obtained for varying $n$ by lattices. Interestingly, the first parameter $r_1$ is always very close to $1/n$, while still improving noticeably on the discrepancy values from Table~\ref{tab:simple_lattice}. It can also be seen that while the first parameters are monotonically decreasing and quite close to $1/n$, the second parameters seem to have similar values (a lot of them are around 0.725). We notice that the discrepancy of the best lattice with 37 points remains far from that of our optimal point set with 20 points. This suggests that construction methods other than lattices should be considered when trying to build high-quality low-discrepancy sets.

\begin{table}[h!]
    \centering
    \begin{tabular}{|c|c|c|c|c|c|c|}
    \hline
       $n$ & $r_1$ & $r_2$ & $d^*_{\infty}(LAT_2)$& $d^*_{\infty}(LAT_{1})$&$d^*_{\infty}(Fib)$ & Runtime (s) \\ \hline
        15 & 0.07 & 0.843 & 0.1083&  0.1233&0.1390 & 1.84 \\ \hline
        20 & 0.052 & 0.737 & 0.083795 &0.0949 &0.1188 & 11.21 \\ \hline
        25 & 0.0411 & 0.79167 & 0.0697 & 0.0774&0.095078 & 41.09 \\ \hline
        30 & 0.0343 & 0.784 & 0.05918 &0.0694 &0.079231 & 194.07 \\ \hline
        32 & 0.03218 & 0.72489 & 0.0613&0.055749 & 0.074279 & 731.00 \\ \hline
        34 & 0.03022 & 0.72497 & 0.0578&0.052556 & 0.069910 & 428.03 \\ \hline
        35 & 0.02936 & 0.72498 & 0.05107 &0.0562 &0.067913 & 1313.43 \\ \hline
        37 & 0.02777 & 0.72489 & 0.048303& 0.0534  & 0.067861 & 1295.44 \\ \hline
        40 & 0.0256 &0.725 &0.0449&0.0496 &0.063836 &3158.23 \\ \hline
    \end{tabular}
    \caption{Best discrepancies obtained for double lattices, as well as associated parameters. The Fibonacci and one-parameter lattice values are added as reference.}
    \label{tab:double_lattice}
\end{table}

\subsection{Other discrepancies}\label{sec:otherdisc}
We describe in this subsection three possible reformulations of our models to build optimal sets for other discrepancy measures commonly used in discrepancy theory. This is not an exhaustive list of possible extensions of our models. We favored constructions with the most common notions in the literature, or that could have an impact on practical applications of low-discrepancy sets.
\subsubsection{Extreme discrepancy}
The extreme discrepancy removes the constraint that the lower-left corner of the box needs to be in $(0,0)$, considering ``axis-parallel'' boxes rather than ``anchored'' ones. More formally, for a point set $P$, 
$$D_{\infty}^{ext}(P):=\sup_{q,r \in [0,1)^d, q < r} \left|\frac{|P \cap [q,r)|}{|P|} -\lambda([q,r))\right|,$$ 
where $q<r$ indicates that $q$ is smaller than $r$ for each coordinate. This discrepancy measure is more general than the $L_{\infty}$ star discrepancy, but given that it is even more complicated to compute, it has been studied less. It is mentioned in~\cite{Nie92,DGWBook}, but there are only very few papers focusing explicitly on it such as~\cite{Moro}. Our models can be easily adapted for this change of discrepancy measure: it requires changing the lower bound of each sum in the open and closed box constraints \eqref{eq:upperboundcont_dummy} and \eqref{eq:lowerboundcont_dummy} (model~\eqref{eq:2dcont_dummy}) or~\eqref{eq:M5_upperboundcont} and~\eqref{eq:M5_lowerboundcont} (model~\eqref{eq:M5_2dcont}), and considering the corresponding constraints for all relevant combinations of $q$ and $r$. For example, in model~\eqref{eq:M5_2dcont}, rather than always starting at 1 and going to $i$ in each sum, we start at a certain $1 < k \leq i$. We have a strict inequality here as we suppose in our models that there are no two points sharing a coordinate. While we do not provide a formal proof of this, and it is more a practical necessity, the results from Section~\ref{sec:mingap} should transfer also to this case, with adapted definitions of admissible shits. Some care should also be put into checking that the box defined by $(k,l)$ and $(i,j)$ is valid, in other words that $k$ is on the lower horizontal edge, $i$ on the upper horizontal edge, $l$ on the left vertical edge and $j$ on the right vertical edge. In model~\eqref{eq:2dcont_dummy} this can be checked using the indicator variables. In model~\eqref{eq:M5_2dcont}, this is trivially done since we rely on the ordering of the $x_i$'s and $y_j$'s and do not check for critical boxes.

Adapting model \eqref{eq:M5_2dcont} requires removing constraints \eqref{eq:M5_upperboundcont} and \eqref{eq:M5_lowerboundcont} and replacing them by

\begin{subequations}\label{eq:Ext_M5}
\begin{align}
& \displaystyle \frac{1}{n} \sum_{u=k}^i\sum_{v=l}^j a_{uv} - (x_{i}-x_{k})(y_{j}-y_{k}) \leq f && \forall i,j,k,l=0,\dots,n, k< i, l < j \label{eq:Ext_M5upp} \\
& \displaystyle \frac{-1}{n} \sum_{u=k+1}^{i-1}\sum_{v=l+1}^{j-1} a_{uv} + (x_{i}-x_{k})(y_{j}-y_{k}) \leq f && \forall i,j,k,l=0,\dots,n+1, k<i, l< j \label{eq:Ext_M5low} \\
& x_0=y_0=0,\; a_{0j}=a_{i0}=0 && \forall i,j=0,\dots,n. \label{eq:Ext_M5dummy}
\end{align}
\end{subequations}

These are the only constraints that have an impact on the discrepancy value, constraints \eqref{eq:M5_dummy} to \eqref{eq:M5_row} are unchanged. Given that the set of boxes defining the $L_{\infty}$ star discrepancy is a subset of those defining the extreme discrepancy, our lower bounds for the star discrepancy are also valid for the extreme discrepancy: constraint \eqref{eq:white} can be kept. However, constraints \eqref{eq:M6_o} to \eqref{eq:M6_v} proceeding from our results in Section~\ref{sec:mingap} do not directly generalize to this case. We note that we still use the $\varepsilon$ spacing with a small constant.

For the extreme discrepancy, the model has many more constraints. It is therefore expected that we are not able to obtain solutions for $n$ as high as in the $L_{\infty}$ star discrepancy case. Table~\ref{tab:ext_disc} describes the runtimes and discrepancy values obtained.

\begin{table}[h!]
    \centering
    \begin{tabular}{|c|c|c|c|c|c|c|c|c|c|c|}
    \hline
        $n$ & 1& 2 & 3& 4& 5& 6& 7& 8& 9& 10 \\
        \hline
        $D_{\infty}^{ext}(OPT_{ext})$ & 1& 0.618 & 0.467& 0.375& 0.312& 0.2697& 0.2489& 0.2292& 0.2143& 0.2   \\
        \hline
        Runtime (s) & 0.44& 0.22 & 0.16& 0.19& 1.93& 3.16& 26.78& 47.85& 129.87& 1313.29   \\
        \hline
        
    \end{tabular}
    \caption{Optimal discrepancies and runtimes for the $L_{\infty}$ extreme discrepancy with an adapted Model~\eqref{eq:M5_2dcont} }
    \label{tab:ext_disc}
\end{table}
\subsubsection{Periodic discrepancies}

Another common aspect in discrepancy measures is to consider $[0,1)^d$ as a $d$-dimensional torus (here $d=2$). The discrepancy is now defined by

$$ D_{\infty}^{per}(P):=\sup_{q,r \in [0,1)^2}\left|\frac{|P \cap [q,r)|}{|P|}-\lambda([q,r))\right|,$$
where each one-dimensional element composing $[q,r)$ is given by $[q_i,r_i)$ if $q_i \leq r_i$ and $[q_i,1) \cup [0,r_i)$ otherwise. This notion is also known as ``Weyl'', ``toroidal'', or ``wrap around'' discrepancy and was studied by Lev in \cite{Lev}. We notice that this is a further generalization of the set of boxes defined for the extreme discrepancy. It is notably more difficult than the star discrepancy. Indeed, as was recently shown by Gilibert, Lachmann and M\"{u}llner in \cite{GilLachMul}, the VC-dimension of the class of boxes required to define it is in $O(d\log(d))$ and not simply $O(d)$. Given a quadruplet $(i,j,k,l)$ where $i> k$ and $j> l$, a box can be of one of the four following types:
\begin{itemize}
    \item The regular extreme discrepancy box $[x_k,x_i) \times [y_l,y_j)$. The constraints are the same as \eqref{eq:Ext_M5upp}, \eqref{eq:Ext_M5low} and \eqref{eq:Ext_M5dummy}. 
    \item The box wrapping around the $x$-axis $([0,x_k)\cup[x_i,1))\times [y_l,y_j)$. The corresponding constraints are \eqref{eq:Per_C1U} and \eqref{eq:Per_C1L}. 
    \item The box wrapping around the $y$-axis $[x_k,x_i] \times ([0,y_l) \cup [y_j,1))$. This is associated with constraints \eqref{eq:Per_C2U} and \eqref{eq:Per_C2L}. 
    \item The box wrapping around both axes $([0,x_k)\cup[x_i,1))\times ([0,y_l) \cup [y_j,1))$. This corresponds to the two final constraints \eqref{eq:Per_C3U} and \eqref{eq:Per_C3L}.
\end{itemize}

Since every box defining the discrepancy has to share its axes with the $(x_i)$ and $(y_j)$ we define, one can easily check that all candidate boxes are covered by one of the types above for a specific quadruplet $(i,j,k,l)$ where $i>k$ and $j>l$.

By the arguments described above, the model for the periodic discrepancy can be obtained by taking that of the extreme discrepancy to which we add the following constraints. Once again, we are modifying model \eqref{eq:M5_2dcont}, but it is also possible to adapt model \eqref{eq:2dcont_dummy}. We take the same constraints as for the extreme case, including the dummy variables, and add the following constraints.

\begin{subequations}\label{eq:Per_M5}
\begin{align}
& \displaystyle \frac{1}{n} \sum_{v=l}^j (\sum_{u=1}^{k} a_{uv} \!+\!  \sum_{u=i}^{n} a_{uv}) \!-\!  (1\!-\! x_{i}\!+\! x_{k})(y_{j}\!-\! y_{l}) \leq f && \forall i,j,k\!=\! 1,\dots,n,\,l\!=\! 0,\ldots,n,\, k \!<\!  i, l \!<\!  j \label{eq:Per_C1U} \\
& \displaystyle \frac{-1}{n} \sum_{v=l+1}^{j-1}( \sum_{u=1}^{k-1} a_{uv} \!+\!\!  \sum_{u=i+1}^n \!\!a_{uv}) \!+\!  (1\!-\! x_{i}\!+\! x_{k})(y_{j}\!-\! y_{l}) \leq f && \forall i,j,k\!=\! 1,\dots,n\!+\! 1,\,l\!=\! 0,\ldots,n,\, k\!<\!  i, l\!<\!  j  \label{eq:Per_C1L}\\
& \displaystyle \frac{1}{n} \sum_{u=k}^i (\sum_{v=1}^{l} a_{uv} \!+\!  \sum_{u=j}^{n} a_{uv}) \!-\!  (x_{i}\!-\! x_{k})(1\!-\! y_{j}\!+\! y_{l}) \leq f && \forall i,j,l\!=\! 1,\dots,n,\,k\!=\! 0,\ldots,n,\, k \!<\!  i, l \!<\!  j \label{eq:Per_C2U} \\
& \displaystyle \frac{-1}{n} \sum_{u=k+1}^{i-1}( \sum_{v=1}^{l-1} a_{uv} \!+\!\!  \sum_{v=j+1}^n \!\! a_{uv}) \!+\!  (x_{i}\!-\! x_{k})(1\!-\! y_{j}\!+\! y_{l}) \leq f && \forall i,j,l\!=\! 1,\dots,n\!+\! 1,\,k\!=\! 0,\ldots,n,\, k \!<\!  i, l \!<\!  j  \label{eq:Per_C2L}\\
& \displaystyle \frac{1}{n}( \sum_{u=1}^k \sum_{v=1}^{l} a_{uv}\!+\!  \sum_{u=i}^n \sum_{u=j}^{n} a_{uv})  \!-\!  (1\!-\! x_i)(1\!-\! y_j)\!-\! x_ky_l \leq f && \forall i,j,k,l\!=\! 1,\dots,n, k\!<\!  i, l \!<\!  j \label{eq:Per_C3U} \\
& \displaystyle \frac{-1}{n} (\sum_{u=1}^{k-1}\sum_{v=1}^{l-1} a_{uv} \!+\!\! \sum_{u=i+1}^{n}\sum_{v=j+1}^n \!\! a_{uv}) \!+\!  (1\!-\! x_i)(1\!-\! y_j)\!+\! x_ky_l \leq f && \forall i,j,k,l\!=\! 1,\dots,n\!+\! 1, k\!<\!  i, l\!<\!  j  \label{eq:Per_C3L}
\end{align}
\end{subequations}

Table~\ref{tab:per_disc} describes the run times and discrepancy values obtained. The periodic discrepancy model contains more constraints than the extreme one, and hence the maximal $n$ for which we are able to find optimal point sets in reasonable time is further reduced.
\begin{table}[h!]
    \centering
    \begin{tabular}{|c|c|c|c|c|c|c|c|}
    \hline
        $n$ & 1& 2 & 3& 4& 5& 6& 7 \\
        \hline
        $D_{\infty}^{per}(OPT)$ & 1 & 0.75 & 0.618 & 0.5 & 0.4249 & 0.3671 & 0.3279    \\
        \hline
        Runtime & 0.1 & 0.6 &0.63 & 0.84& 53.80& 98.66 & 597.78    \\
        \hline
        
    \end{tabular}
    \caption{Optimal discrepancies and runtimes for the $L_{\infty}$ periodic discrepancy.}
    \label{tab:per_disc}
\end{table}

\subsubsection{Multiple-corner discrepancy}

Unlike the two previous notions which are common in theoretical discrepancy literature, this notion targets applications.
The $L_{\infty}$ star discrepancy creates an asymmetry by giving more importance to $(0,\ldots,0)$. Often, it is undesirable to orient the regularity of our point set based on a specific corner of the space. To limit the impact of this, a possible counter-measure is to consider the $L_{\infty}$ star discrepancy according to all $2^d$ corners, the worst of which is then the discrepancy value. In dimension 2, this \emph{4-corner discrepancy} is defined by the following
$$D_{\infty}^{4cor}(P):=\sup_{q=(q_1,q_2) \in [0,1)^2} \max\{ D([0,q),P),D((q,1],P),D([0,q_1)\times (q_2,1],P),D((q_1,1]\times [0,q_2),P)\}, $$
where $D(q,P)$ is the local discrepancy of the box $q$ for the point set $P$, c.f.\ Section~\ref{sec:Gene}.

For a given point set $P$, this can be reformulated w.r.t.\ the classical $L_{\infty}$ star discrepancy by considering, in addition, the sets $P_2:=\{(1-x,y):(x,y)\in P\}$, $P_3:=\{(1-x,1-y):(x,y)\in P\}$ and $P_4:=\{(x,1-y):(x,y) \in P\}$.

We then have 
$$D_{\infty}^{4cor}(P)= \max(d_{\infty}^{*}(P),d_{\infty}^{*}(P_2),d_{\infty}^{*}(P_3),d_{\infty}^{*}(P_4)).$$
Given the definition above, the definitions of the point sets, and our models for the star discrepancy, it can be easily seen that we simply need to make four copies of the constraints \eqref{eq:M5_upperboundcont} and \eqref{eq:M5_lowerboundcont} to adapt model \eqref{eq:M5_2dcont} to this situation. In addition to the constraints for $P$ given previously, the following constraints need to be added.

\begin{subequations}\label{eq:4c}
\begin{align}
& \displaystyle \frac{1}{n} \sum_{u=i}^n \sum_{v=1}^j a_{uv} - (1-x_i)y_j \leq f && \forall i,j=1,\dots,n \label{eq:4c_P1U} \\
& \displaystyle \frac{-1}{n} \sum_{u=i+1}^{n} \sum_{v=1}^{j-1} a_{uv} + (1-x_i)y_j \leq f && \forall i,j=1,\dots,n+1 \label{eq:4c_P1L}\\
& \displaystyle \frac{1}{n} \sum_{u=i}^n\sum_{v=j}^n a_{uv} - (1-x_{i})(1-y_{j}) \leq f && \forall i,j=1,\dots,n \label{eq:4c_P2U} \\
& \displaystyle \frac{-1}{n} \sum_{u=j+1}^{n}\sum_{v=i+1}^{n} a_{uv} + (1-x_{i})(1-y_{j}) \leq f && \forall i,j=1,\dots,n+1 \label{eq:4c_P2L}\\
& \displaystyle \frac{1}{n} \sum_{u=1}^i \sum_{v=j}^n a_{uv} - x_{i}(1-y_{j}) \leq f && \forall i,j=1,\dots,n \label{eq:4c_P3U} \\
& \displaystyle \frac{-1}{m} \sum_{u=1}^{i-1}\sum_{v=j+1}^{n} a_{uv} + x_{i}(1-y_{j}) \leq f && \forall i,j=1,\dots,n+1 \label{eq:4c_P3L}
\end{align}
\end{subequations}

 Finally, Table~\ref{tab:4c} gives the results obtained by our model with the multiple-corner discrepancy. We specify both the multiple-corner discrepancy of our set, and its actual $L_{\infty}$ star discrepancy (necessarily smaller or equal as all the $L_{\infty}$ star discrepancy constraints are included in the model). We note that the $L_{\infty}$ star discrepancy values are not much higher than the optimal ones, and far better than previously known sets. This suggests that considering a set with good multiple-corner discrepancy could be a good tradeoff for applications.

\begin{table}[h]
    \centering
    \begin{tabular}{|c|c|c|c|c|c|c|c|c|c|}
    \hline
        $n$ & 1 & 2 & 3 & 4 & 5 & 6 & 7 & 8 & 9 \\
        \hline
        $D^*_{\infty}(OPT_{star})$ & $\frac{1+\sqrt{5}}{2}$ & 0.366 & 0.2847 &0.2500 & 0.2000 & 0.1667 & 0.1500 & 0.1328 & 0.1235\\
        \hline
        $D^*_{\infty}(Fib)$ & 1.0 & 0.6909 & 0.5880 & 0.4910 & 0.3528 & 0.3183 & 0.2728 & 0.2553 & 0.2270 \\
        \hline
        $D_{\infty}^{*}(OPT_{4c})$ & 0.75 & 0.5 & 0.3333& 0.2548 & 0.2166& 0.1875 &0.1632 & 0.1438 & 0.1319 \\
        \hline
        $D_{\infty}^{4cor}(OPT_{4c})$ & 0.75 & 0.5& 0.3333 & 0.2548& 0.2166& 0.1875 & 0.1632& 0.1438& 0.1319  \\
        \hline
        Runtime & 0.0 & 0.01 & 0.02 & 0.12 &0.3 & 0.68 &1.14 &1.97 &4.34\\
        \hline
        \multicolumn{10}{c}{ }\\
        
        \hline
        $n$ &10& 11 & 12 & 13 & 14 & 15 & 16 & 17 & 18 \\
        \hline
        $D_{\infty}^*(OPT_{star})$& 0.1111 & 0.1030 & 0.0952 & 0.0889 & 0.0837 & 0.0782 & 0.0739 & 0.06996 & 0.06667\\
        \hline
        $D^*_{\infty}(Fib)$ & 0.2042&0.1857 & 0.1702 & 0.1571 & 0.1459 & 0.1390 & 0.1486 & 0.1398 & 0.1320 \\
        \hline
        $D_{\infty}^{*}(OPT_{4c})$& 0.1197 &0.1083 & 0.09999& 0.09502  & 0.08874 & 0.08429  & 0.07916 & 0.07510 &0.0715\\
        \hline
        $D_{\infty}^{4cor}(OPT_{4c})$ & 0.1197&0.1083 & 0.09999& 0.09502  & 0.08874 & 0.08429  & 0.07916 & 0.07510 & 0.0715   \\
        \hline
        Runtime& 12.68 & 26.45 & 46.46 & 88.46 & 219.22 &783.74&1616.65 &3287.64 &7412.73\\
        \hline
    \end{tabular}
    \caption{Comparison of previously best values for low-discrepancy sets and our optimal sets, both the 4-corner optimal set $OPT_{4c}$ and the $L_{\infty}$ star discrepancy optimal set $OPT_{star}$.}
    \label{tab:4c}
\end{table}

\begin{figure}[h]
    \centering
    \includegraphics[width=0.9\textwidth]{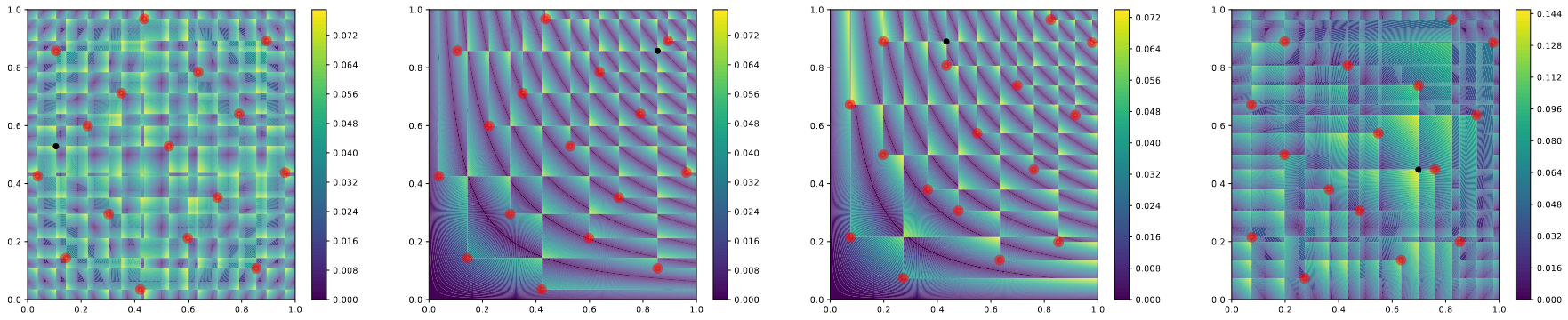}
    \caption{n=16. Left to right: local 4-corner discrepancy of the optimal 4-corner set, local $L_{\infty}$ star discrepancy of the optimal 4-corner set, local star discrepancy of the optimal $L_{\infty}$ star set and local 4-corner discrepancies of the optimal $L_{\infty}$ star set. While the optimal 4-corner set is very good for the $L_{\infty}$ star discrepancy, the same can't be said for the 4-corner discrepancy of the optimal $L_{\infty}$ star set.}
\end{figure}

\section{Conclusion}
Using two non-linear programming models, we provide optimal constructions for sets of minimal $L_{\infty}$ star discrepancy for $n=1$ to $n=21$ in two dimensions and $n=1$ to $n=8$ in three dimensions. For $n \geq 7$ in dimension 2, these sets have much lower discrepancy than the previously known sets. Analyzing the local discrepancies of these sets and comparing to known low-discrepancy sequences also shows a clear structural difference, suggesting that known sequences are over-sampling in the lower zones of $[0,1]^2$ compared to an optimal distribution. These optimal constructions therefore suggest that new approaches are possible to design low-discrepancy sets and sequences. We furthermore observed an almost negligible extra cost for point sets that are well-distributed not only for the origin $(0,\ldots,0)$ but also other corners of the square.

There are still numerous open questions to tackle, both for our exact approaches, and for the development of heuristics we have not mentioned here. Our models could be improved by finding new cuts, or clever branching methods to reduce the problem into a series of smaller subproblems. On the heuristics side, the field is wide open. For example, our models can be easily adapted to adding a point optimally to an existing set, with far fewer variables that start unassigned. This could allow the construction of excellent point sets for greater values of $n$. Simply constructing a very good point set in multiple steps has not been studied yet and could lead to promising results. On the theoretical side, the relation between the multiple corner discrepancy and the star discrepancy has not been studied properly yet. Proving that there is an optimal set in general position in dimensions higher than 3 also remains open, as well as our conjecture on a $d/2n$ lower-bound for $n$ sufficiently large.

More generally, mathematical programming methods are not widespread in the discrepancy community. We hope that this new approach and its flexibility to tackle many different problems linked to discrepancy will motivate further applications.

\section*{Acknowledgments}
We thank Art Owen for introducing us to the multiple-corner discrepancy and Michael Gnewuch for providing helpful references for the periodic discrepancy. This work was granted access to the HPC resources of the SACADO MeSU platform at Sorbonne Université.

This project was partially supported by PHC Procope 50969UF project funded by the German Academic Exchange Service (DAAD, Project-ID 57705092), the French Ministry of Foreign Affairs, and the French Ministry of Higher Education and Research.
This work was also partially supported by the “PHC PESSOA” program (project DISCREPANCY -- Discrepancy Problems - Algorithms and Complexity, number 49173PH), funded by the French Ministry for Europe and Foreign Affairs, the French Ministry for Higher Education and Research, and the Portuguese Foundation for Science and Technology (FCT). This work is partially funded by the FCT, I.P./MCTES through national funds (PIDDAC), within the scope of CISUC R\&D Unit - UID/CEC/00326/2020. 

% % \newpage
\bibliographystyle{alpha}
\bibliography{sample}

\end{document}